\newtheorem{theorem}{Theorem}[section]
\newtheorem{corollary}{Corollary}[theorem]
\newtheorem{remark}{Remark}
\newtheorem{lemma}[theorem]{Lemma}
\newtheorem{definition}{Definition}[section]
\newcommand{\subparagraph}{}
\newtheorem*{claim}{Claim}
\newcolumntype{P}[1]{>{\centering\arraybackslash}p{#1}}
\newcommand{\E}{\ensuremath{\operatorname{\mathbb{E}}}}
\newcommand{\real}{\ensuremath{\operatorname{\mathbb{R}}}}
\newcommand{\mprob}{\ensuremath{\operatorname{\mathbb{P}}}}
\begin{document}
\title{Differentially Private Synthetic Data with Private Density Estimation} 


\author{%
  \IEEEauthorblockN{Nikolija Bojkovic}
  \IEEEauthorblockA{School of Electrical Engineering \\
                   University of Belgrade \\
                   Belgrade, Serbia \\
                   Email: nikolijabojkovic98@gmail.com}
  \and
  \IEEEauthorblockN{Po-Ling Loh}
  \IEEEauthorblockA{Department of Pure Mathematics and Mathematical Statistics \\ 
                    University of Cambridge \\
                    Cambridge, United Kingdom \\
                    Email: pll28@cam.ac.uk}
}

\maketitle

\begin{abstract}
The need to analyze sensitive data, such as medical records or financial data, has created a critical research challenge in recent years. In this paper, we adopt the framework of differential privacy, and explore mechanisms for generating an entire dataset which accurately captures characteristics of the original data.
We build upon the work of Boedihardjo et al.~\cite{BoeEtal22}, which laid the foundations for a new optimization-based algorithm for generating private synthetic data.
Importantly, we adapt their algorithm by replacing a uniform sampling step with a private distribution estimator; this allows us to obtain better computational guarantees for discrete distributions, and develop a novel algorithm suitable for continuous distributions.
We also explore applications of our work to several statistical tasks.
\end{abstract}

\section{Introduction}

We live in an era where it is increasingly important to preserve data privacy, considering the potential harm associated with unauthorized sharing of personal information. Nonetheless, instances arise where accessing and utilizing data is imperative for purposes such as medical research relying on medical records, or social network analysis requiring network data. However, traditional anonymization methods may fall short, leading to the need for more robust solutions.
The concept of differential privacy was first introduced by Dwork~\cite{Dwo06}, with subsequent papers~\cite{DwoEtal06,NikEtal13,NisEtal07, LiEtal10, MohEtal11, DwoEtal14,Blu13} developing important new theoretical notions regarding differential privacy. Significant work has been done on the private answering of dataset queries using differential privacy~\cite{BarEtal07, HarRot10}.
However, combining differential privacy with synthetic data release enables a broader, more flexible analysis, since downstream tasks such as clustering, classification, regression, and visualization may be conducted while still preserving privacy, allowing exploration beyond a predetermined set of queries. 
This has far-reaching implications for data-driven research and decision-making across fields such as healthcare, finance, and social science~\cite{JorEtal22}.  
 
 




The starting point of our work is an algorithm for generating differentially private synthetic data proposed by Boedihardjo et al.~\cite{BoeEtal22}, for which detailed steps are provided in Section~\ref{SecBackground}. Boedihardjo et al.~\cite{BoeEtal22} provided theoretical guarantees for the accuracy of their private dataset (measured in terms of a uniform bound on the errors of empirical averages of test functions) when the data lie in a Boolean space ($\Omega=\{0,1\}^p$). This allows them to take the reference distribution $\mu$ used to generate the initial data in their algorithm as uniform. The main contribution of our work is to analyze alternative choices for $\mu$, i.e., an initial estimate of the data-generating distribution, and consequences for private synthetic data generation over more general spaces. In Section~\ref{SecDiscrete}, we study discrete datasets defined on $\Omega= \{0,1,2,\ldots,t-1\}^p$, using a private histogram estimator~\cite{WasZho10} for $\mu$; in Section~\ref{SecContinuous}, we study continuous datasets defined on $\Omega = [0,1]^p$, using a private kernel density estimator~\cite{AldRub17} for $\mu$. Our methods provide rigorous guarantees for privacy, accuracy, and computational efficiency. Notably, we are able to demonstrate concrete computational gains in comparison to the original proposal with uniform sampling~\cite{BoeEtal22} in the discrete setting---whereas in the continuous setting, we propose an algorithm which fills an important theoretical gap where the uniform sampling analysis would not apply.


Other existing approaches for private synthetic data generation with theoretical guarantees are relatively scarce. We highlight the follow-up papers by the same authors~\cite{BoeEtal22b, BoeEtal22c} and an approach based on kernel mean embeddings~\cite{BalEtal18}, and also note that earlier work on private synthetic data involved algorithms which were not computationally tractable~\cite{HarRot10, HarEtal12} due to their focus on worst-case rather than average-case datasets~\cite{UllVad11}. A separate line of work leverages generative adversarial networks~\cite{LuEtal17,JorEtal18,XuEtal19, BeaEtal19, DocEtal23}. However, we note that although these papers derive rigorous privacy guarantees on the data generated by the algorithms, the accuracy of the synthetic data is only evaluated empirically. The paper \cite{McKEtal21} also only contains empirical comparisons of accuracy, but the algorithm proposed there bears some resemblance to the algorithms studied in our paper: In particular, they solve an optimization problem which finds the best distribution in a class that matches the marginals of the observed data, subject to noisy  perturbations. However, the optimization is restricted to a set of distributions characterized by a class of graphical models. Finally, we mention the paper \cite{LinEtal23}, which recently proposed an interesting method based on ``private evolution" that also bears some similarity to our approach. Their algorithm involves iteratively generating samples from a (private) learned histogram, which is iteratively reweighted in such a way that it eventually approaches the histogram of the private data. The histogram is then used to generate data that are fed into an application programming interface (API) for further inferential tasks. The authors provide a theoretical bound on the accuracy of the private histogram in terms of Wasserstein distance. We note that while the idea of reweighting randomly generated data in a private manner to estimate a histogram of the observed data is common both in \cite{LinEtal23} and our work, our algorithms do not involve iterative fitting and seek to quantify accuracy in a different sense.

\textbf{Notation:} For a matrix $A \in \real^{p \times p}$, we write $\|A\|_2$ to denote the spectral norm. For $x \in \real$, we write $(x)_+ := \max\{x, 0\}$. For two distributions $p$ and $q$, we write $\kappa(p\|q)$ to denote the R\'{e}nyi condition number, and write $TV(p,q)$ to denote the total variation distance. We use the notation $B(x,r) := \{x' \in [0,1]^p \mid \|x - x'\|_2 \leq r, r > 0\}$, and let $v_p$ denote the volume of the unit ball $B(0,1)$.


\section{Background}
\label{SecBackground}


We first review the basic definition of differential privacy, and then provide a definition of how we will quantify accuracy of a synthetic dataset.

\begin{definition}[Privacy]
A randomized function $M$ is $\epsilon$-differentially private if, for all datasets $X$ and $X'$ differing in one element and all measurable sets $S \subseteq \text{range}(M)$, we have $P[M(X) \in S] \leq e^{\epsilon} \cdot P[M(X') \in S]$.
 \end{definition}
The following notion of accuracy is proposed in \cite{BoeEtal22}:
\begin{definition}[Accuracy]
\label{DefAccuracy}
Consider a dataset $X = \{x_i\}_{i=1}^n$ and a finite class $\mathcal{F}$ of functions from $\Omega$ to $[-1, 1]$.
A synthetic dataset $Y = \{y_j\}_{j=1}^k$ is $\delta$-accurate if
\[
\max_{f \in \mathcal{F}}\left| \frac{1}{k} \sum_{j=1}^{k} f(y_j) - \frac{1}{n} \sum_{i=1}^{n} f(x_i) \right| \leq \delta. 
\]
\end{definition}
Our technical results will require $\mathcal{F}$ to be a class of bounded functions from $\Omega$ to $[-1,1]$, which we assume for the remainder of the paper.

A particular case of interest is where $\mathcal{F}$ is the set containing all functions that encode marginals up to degree $d$, in which case $|\mathcal{F}| = \binom{p}{\leq d} := \binom{p}{0} + \binom{p}{1} + \cdots + \binom{p}{d}$. The idea of looking at marginals is common \cite{ThaEtal12,BoeEtal22b,BarEtal07}, as it ensures that no individual's data significantly influences the final output of the algorithm. Note that other methods to evaluate the accuracy of private algorithms include $L_1$-sensitivity~\cite{DwoEtal06, MohEtal11}, query matrix sensitivity~\cite{NisEtal07}, and computing frequent items~\cite{BarEtal21}.

Algorithm~\ref{AlgBoeEtal}, proposed by Boedihardjo et al.~\cite{BoeEtal22}, is the starting point of our work. Briefly, the algorithm proceeds by sampling data from a reference distribution $\mu$ (which is taken independent of the data) to create a reduced space $\Omega^* \subseteq \Omega^m$. A histogram $h^*$ is then obtained over $\Omega^*$ by solving an optimization problem to obtain a distribution which best fits noisy versions of empirical averages of functions $f \in \mathcal{F}$. Synthetic data are generated by sampling from $h^*$. The algorithm has the following theoretical guarantee, where we recall the definition $\kappa(\nu\parallel\mu) = \int \left(\frac{d\nu}{d\mu}\right)^2 d\mu$ of the R\'{e}nyi condition number:


\begin{algorithm}
    \caption{Private synthetic data generation~\cite{BoeEtal22}}\label{AlgBoeEtal}
    \begin{algorithmic}
        \STATE \textbf{Input:} Dataset $X=\left(x_1, \ldots, x_n\right) \subseteq \Omega^n$, family $\mathcal{F}$\\ of test functions from $\Omega$ to $[-1,1]$,
        parameter $\sigma>0$  
        \STATE 1. Sample i.i.d.\ data $\Omega^*= \{z_1, \dots, z_m\} \subseteq \Omega^m$ with probability measure $\mu$
                \STATE 2. For each test function $f \in \mathcal{F}$, \\
generate an independent Laplacian random variable $\lambda(f) \sim \operatorname{Lap}(\sigma)$
                \STATE 3. Compute a density $h^*$ by solving
\begin{multline*}
h^*=\operatorname{argmin}\Bigg\{\max _{f \in \mathcal{F}}\Bigg| \sum_{i=1}^m f\left(z_i\right) h\left(z_i\right) \\
-\frac{1}{n} \sum_{i=1}^n f\left(x_i\right)-\lambda(f)\Bigg|: h \text { is a density on } \Omega^*\Bigg\} .
\end{multline*}
                \STATE 4. Draw i.i.d.\ data $Y=\left(y_1, \ldots, y_k\right)$\\ according to $h^*$
                \STATE \textbf{Output}: Synthetic data $Y=\left(y_1, \ldots, y_k\right)$
    \end{algorithmic}
\end{algorithm}
%
\begin{theorem}[Theorems 2.2 \& 2.3 of Boedihardjo et al.~\cite{BoeEtal22}]
\label{ThmBoeEtal}
Let $\delta, \gamma > 0$ and set $\sigma = \frac{\delta}{\log(|\mathcal{\mathcal{F}}|/\gamma)}$.
\begin{enumerate}
\item If $n \geq \frac{2|\mathcal{F}|}{\epsilon \delta} \log\left(\frac{|\mathcal{F}|}{\gamma}\right)$, Algorithm~\ref{AlgBoeEtal} is $\epsilon$-differentially private.
\item Let $\min\{n, k\} \geq \frac{1}{\delta^2} \log\left(\frac{|\mathcal{F}|}{\gamma}\right)$ and $m \geq \frac{K|\mathcal{F}|}{\delta^2 \gamma}$, where $\delta \in (0, 1/2]$ and $\gamma \in (0, 1/4)$. Suppose $X$ is sampled i.i.d.\ from $\Omega$ according to a probability measure $\nu$.
Assume $\mu$ satisfies $\kappa(\nu\|\mu) \leq K$. Then with probability at least $1 - 4\gamma$, the synthetic dataset $Y$ generated by Algorithm~\ref{AlgBoeEtal} is $8\delta$-accurate.
\end{enumerate}
\end{theorem}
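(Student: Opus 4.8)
The plan is to treat the two parts separately: Part~1 is a sensitivity calculation followed by an appeal to post-processing, while Part~2 is a chain of four high-probability estimates built around a carefully chosen feasible point of the optimization in Step~3 of Algorithm~\ref{AlgBoeEtal}.

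For Part~1, I would observe that the dataset $X$ enters the algorithm only through the vector $\left(\frac1n\sum_{i=1}^n f(x_i)\right)_{f\in\mathcal F}$; the Laplace perturbations $\lambda(f)\sim\operatorname{Lap}(\sigma)$ are added to this vector, and everything downstream (solving for $h^*$ and sampling $Y$ from $h^*$) is data-independent post-processing. Since each $f$ takes values in $[-1,1]$, changing a single $x_i$ changes each coordinate of this vector by at most $2/n$, so its $\ell_1$-sensitivity is at most $2|\mathcal F|/n$; the Laplace mechanism therefore makes the perturbed vector $\frac{2|\mathcal F|}{n\sigma}$-differentially private. Substituting $\sigma=\delta/\log(|\mathcal F|/\gamma)$ shows this quantity is $\le\epsilon$ exactly when $n\ge\frac{2|\mathcal F|}{\epsilon\delta}\log(|\mathcal F|/\gamma)$, and closure of differential privacy under post-processing yields the claim.

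For Part~2, write $\nu$ for the data-generating distribution, and work on the intersection of four events, each of probability at least $1-\gamma$. Event~(i): by Hoeffding and a union bound over $\mathcal F$, $|\frac1n\sum_i f(x_i)-\mathbb E_\nu f|\le\delta$ for all $f$, which needs $n\gtrsim\delta^{-2}\log(|\mathcal F|/\gamma)$. Event~(ii): conditioning on $(\Omega^*,h^*)$ and using that $Y$ is i.i.d.\ $h^*$ with $\mathbb E_{h^*}f=\sum_i f(z_i)h^*(z_i)$, the same argument gives $|\frac1k\sum_j f(y_j)-\sum_i f(z_i)h^*(z_i)|\le\delta$ for all $f$, needing $k\gtrsim\delta^{-2}\log(|\mathcal F|/\gamma)$. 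Event~(iii): by the Laplace tail bound and a union bound, $\max_f|\lambda(f)|\le\sigma\log(|\mathcal F|/\gamma)=\delta$. Event~(iv), the crucial one: the importance-weighted measure $h$ on $\Omega^*$ with $h(z_i)\propto\frac{d\nu}{d\mu}(z_i)$ satisfies $|\sum_i f(z_i)h(z_i)-\mathbb E_\nu f|\le 4\delta$ for all $f$. For~(iv) I would apply Chebyshev rather than Hoeffding (since $\frac{d\nu}{d\mu}$ may be unbounded), together with a union bound, to the unnormalized averages $A_f=\frac1m\sum_i f(z_i)\frac{d\nu}{d\mu}(z_i)$ and $B=\frac1m\sum_i\frac{d\nu}{d\mu}(z_i)$: one has $\mathbb E_\mu A_f=\mathbb E_\nu f$, $\mathbb E_\mu B=1$, and $\operatorname{Var}_\mu(A_f)\le\frac1m\mathbb E_\mu[(\frac{d\nu}{d\mu})^2]=\frac{\kappa(\nu\|\mu)}{m}\le\frac{K}{m}$ (and likewise for $B$, using $|f|\le1$), so the hypothesis $m\ge\frac{K|\mathcal F|}{\delta^2\gamma}$ forces $|A_f-\mathbb E_\nu f|\le\delta$ and $|B-1|\le\delta$ for all $f$ with probability $\ge 1-\gamma$; since $\sum_i f(z_i)h(z_i)=A_f/B$ and $\delta\le1/2$ gives $B\ge1/2$, one then gets $|A_f/B-\mathbb E_\nu f|\le(\delta+\delta)/B\le 4\delta$.

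On the intersection of these four events, $h$ is a feasible density for the Step~3 program (it is nonnegative because $\kappa(\nu\|\mu)<\infty$ forces $\nu\ll\mu$, and it sums to one by construction), so the optimal value is at most the objective at $h$, which by~(iv),~(i),~(iii) and the triangle inequality is at most $4\delta+\delta+\delta=6\delta$; hence $\max_f|\sum_i f(z_i)h^*(z_i)-\frac1n\sum_i f(x_i)|\le 6\delta+\delta=7\delta$ after adding back $|\lambda(f)|\le\delta$, and then~(ii) contributes one more $\delta$, giving $\max_f|\frac1k\sum_j f(y_j)-\frac1n\sum_i f(x_i)|\le 8\delta$, i.e.\ $8\delta$-accuracy. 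A union bound over the four events gives the stated probability $1-4\gamma$, which also explains the requirement $\gamma<1/4$. I expect the main obstacle to be Event~(iv): establishing uniform (over $\mathcal F$) control of the importance-weighted empirical measure with the correct dependence on $\kappa(\nu\|\mu)$, and tracking the normalization carefully---together with the roles of $\delta\le1/2$ and of $K$---so that the constants in the chain close at exactly $8\delta$.
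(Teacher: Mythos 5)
This theorem is stated in the paper as an imported result (Theorems 2.2 and 2.3 of Boedihardjo et al.\ \cite{BoeEtal22}) and is not proved there, so there is no in-paper argument to compare against; your proposal correctly reconstructs the original proof: sensitivity of the query vector plus post-processing for privacy, and for accuracy the feasibility of the importance-weighted empirical measure $h(z_i)\propto\frac{d\nu}{d\mu}(z_i)$, with Chebyshev and the R\'enyi condition number controlling its variance, followed by the chain of triangle inequalities summing to $8\delta$. The only looseness is in the constants (Hoeffding as you state it needs $n\ge\frac{2}{\delta^2}\log(2|\mathcal F|/\gamma)$ rather than the stated $\frac{1}{\delta^2}\log(|\mathcal F|/\gamma)$, and the union bound in your Event~(iv) needs $|\mathcal F|+1$ terms), which is exactly the bookkeeping the original paper carries out more carefully; the structure of your argument is sound.
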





When the data are Boolean ($\Omega=\{0,1\}^p$), Boedihardjo et al.~\cite{BoeEtal22} suggest taking $\mu$ to be a uniform measure, in which case we can choose $K = |\Omega|$ in part (2) of Theorem~\ref{ThmBoeEtal}. See Theorem~\ref{ThmBoeEtalUni} in Appendix~\ref{AppUniform} for more details.
\begin{remark}

The guarantees of Theorem~\ref{ThmBoeEtal} shed some light on the benefits of private synthetic data. Although Algorithm~\ref{AlgBoeEtal} bears resemblance to the Laplace mechanism, in the sense that independent Laplacian noise $\lambda(f)$ is added for each $f \in \mathcal{F}$, the overall \emph{storage} cost of synthetic data is only $k = O(\log |\mathcal{F}|)$, as opposed to $O(|\mathcal{F}|)$, which would be the cost of storing the answers to all individual queries. This is of course only assured if the sample size is large enough, as stipulated in part (1) of the theorem. Furthermore, there are additional computational costs of solving the optimization program in step (2) of the algorithm, but as we discuss in more detail in Section~\ref{SecComp}, the program can be recast as a linear program and solved efficiently in practice.
\end{remark}

On the other hand, choosing $\mu$ to be uniform may not be optimal. Furthermore, in the continuous case, we cannot upper-bound the R\'{e}nyi condition number with the same techniques. Table~\ref{TableSummary} compares the private synthetic data generation methods studied in our paper.

\begin{table*}[!ht]
    \centering
    \resizebox{\textwidth}{!}{%
    \begin{tabular}{|p{0.21\linewidth}|p{0.11\linewidth}|p{0.09\linewidth}|p{0.26\linewidth}|p{0.07\linewidth}|p{0.12\linewidth}|}
    \hline
         \multicolumn{1}{|c|}{\textbf{Mechanism}}  & \multicolumn{1}{c|}{\textbf{$\Omega$}} & \multicolumn{1}{c|}{\textbf{$|X|$}} & \multicolumn{1}{c|}{\textbf{Description}} & \multicolumn{1}{c|}{\textbf{$|Y|$}} & \multicolumn{1}{c|}{\textbf{Linear Program}} \\ \hline
        Uniform sampling (Theorem~\ref{ThmBoeEtalUni}) &
        $\{0,1,...,t-1\}^p$ & $O(|\mathcal{F}|\log |\mathcal{F}|)$  &
        Algorithm~\ref{AlgBoeEtal} is run with $\mu$ uniform & $O(\log |\mathcal{F}|)$  &
        $|\mathcal{F}|+O(|\Omega|)+1$\\ \hline
         Sampling from a private histogram (Theorem~\ref{t1}) & $\{0,1,...,t-1\}^p$ & $O(|\mathcal{F}|\log |\mathcal{F}|)$ & Algorithm~\ref{AlgBoeEtal} is run with $\mu$ privately estimated from the empirical measure $\nu$ using the perturbed histogram method &  $O(\log |\mathcal{F}|)$ & $|\mathcal{F}|+O(1)+1$ \\ \hline
        Sampling using kernel density estimation (Theorem~\ref{t2}) &  $[0,1]^p$ & $O(|\mathcal{F}|\log |\mathcal{F}|)$  & Algorithm~\ref{AlgBoeEtal} is run with $\mu$ privately estimated from the empirical measure $\nu$ using kernel density estimation & $O(\log |\mathcal{F}|)$  & $|\mathcal{F}|+O(1)+1$ \\ \hline
    \end{tabular}%
    }
\caption{Summary of main results}
\label{TableSummary}
\end{table*}
\section{Sampling from a private histogram}
\label{SecDiscrete}

Consider a discrete distribution on the set $\Omega = \{0, 1, 2, \ldots, t-1\}^p$. This extends the framework of Boedihardjo et al.~\cite{BoeEtal22} beyond Boolean data.
Rather than using a uniform distribution for $\mu$, as suggested by Boedihardjo et al.~\cite{BoeEtal22} (cf.\ Appendix~\ref{AppUniform}),
our idea is to
choose $\mu$ using the perturbed histogram estimator method from Wasserman and Zhou~\cite{WasZho10} (the detailed algorithm is included as Algorithm~\ref{AlgWasZho} in Appendix \ref{app:4}). We arrive at the following theorem:

\begin{theorem}\label{t1}
Let $\epsilon > 0$, $\delta \in (0, \frac{1}{2}]$, and $\gamma\in (0, \frac{1}{4})$. 
Suppose $X$ is sampled i.i.d.\ from $\Omega= \{0,1,2,\ldots,t-1\}^p$ according to a probability measure $\nu$. Define $\tau_1 := \min_x \nu(x)$ and $\tau_2 := \max_x \nu(x)$, and suppose $\tau_1 > 0$. Set $\sigma = \delta/\log\left(|\mathcal{F}|/\gamma\right)$,
Suppose $\mu$ is obtained using the perturbed histogram method with privacy parameter $\epsilon$. Suppose
\begin{align*}
    & n \geq \max \Bigg\{\frac{2|\mathcal{F}|}{\epsilon \delta} \log\left(\frac{|\mathcal{F}|}{\gamma}\right), \quad \frac{8}{\tau_1^2}\left(|\Omega| + 2\log \frac{1}{\gamma}\right), \notag \\
    & \qquad \qquad \qquad \frac{8\delta |\Omega| (1+\tau_2 |\Omega|)  \log |\Omega|}{\tau_1 \gamma \log(|\mathcal{F}|/\delta)} \Bigg\}, \nonumber \\
    & \min\{n, k\} \geq \frac{1}{\delta^2} \log\left(\frac{|\mathcal{F}|}{\gamma}\right), \quad m \geq \frac{2K|\mathcal{F}|}{\delta^2 \gamma}, \nonumber \\
    & \text{where }\nonumber \\
    & K = \left(1+C_1\left(\sqrt{\frac{2\log \frac{1}{\gamma}}{n}}+\sqrt{\frac{|\Omega|}{n}}\right)\right)^{\frac{1}{2}} \nonumber \\
    & \qquad \qquad \cdot \left(1+ \frac{C_2\delta |\Omega| (1+\tau_2 |\Omega|) \log |\Omega|}{\gamma n \log (|\mathcal{F}|/\delta)}\right)^{1/2}.
\end{align*}
Then Algorithm~\ref{AlgBoeEtal} outputs a $2\epsilon$-differentially private dataset $Y$ which, with probability at least $1 - 3\gamma - \exp(-cn)$, is $8\delta$-accurate.
\end{theorem}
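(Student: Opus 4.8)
The plan is to view the perturbed histogram estimator of Wasserman and Zhou~\cite{WasZho10} (Algorithm~\ref{AlgWasZho}) as an $\epsilon$-differentially private preprocessing step that produces the reference measure $\mu$, and then to apply Theorem~\ref{ThmBoeEtal} — in a conditional form, since $\mu$ is now data-dependent rather than fixed as in the Boolean setting of~\cite{BoeEtal22} — to this choice of $\mu$. Privacy is the easier half. Algorithm~\ref{AlgWasZho} releases $\mu$ in an $\epsilon$-differentially private manner by construction, and within Algorithm~\ref{AlgBoeEtal} the only further dependence on $X$ is through the noisy averages $\frac{1}{n}\sum_{i=1}^{n}f(x_i)+\lambda(f)$, $f\in\mathcal F$: replacing one data point changes the vector of empirical averages by at most $2|\mathcal F|/n$ in $\ell_1$-norm, so with $\sigma=\delta/\log(|\mathcal F|/\gamma)$ and $n\ge\frac{2|\mathcal F|}{\epsilon\delta}\log(|\mathcal F|/\gamma)$ this release is $\epsilon$-differentially private, exactly as in Theorem~\ref{ThmBoeEtal}(1). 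Since $\Omega^*$, $h^*$, and $Y$ depend on $X$ only through $\mu$ and these noisy averages, the output is a post-processing of two $\epsilon$-private releases, and basic sequential composition gives $2\epsilon$-differential privacy.

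For accuracy, I would first reduce the claim to a statement about $\mu$ alone. Inspecting the proof of Theorem~\ref{ThmBoeEtal}(2), the hypothesis on $\mu$ is used only to bound the error of the importance-weighted density $h(z_i)\propto\frac{d\nu}{d\mu}(z_i)$ on $\Omega^*$, which is governed by $\operatorname{Var}_{z\sim\mu}\!\left(f(z)\frac{d\nu}{d\mu}(z)\right)\le\kappa(\nu\|\mu)\le K$; the other error terms — Hoeffding for $\frac1n\sum_i f(x_i)$ about $\mathbb E_\nu f$, the tail bound $\max_f|\lambda(f)|\le\delta$, and Hoeffding for $\frac1k\sum_j f(y_j)$ about $\sum_i f(z_i)h^*(z_i)$ — depend on $\mu$ only through $\kappa(\nu\|\mu)$ or not at all. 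It therefore suffices to show that $\kappa(\nu\|\mu)\le K$ holds with probability at least $1-\gamma-\exp(-cn)$, and then to run the argument of Theorem~\ref{ThmBoeEtal}(2) conditionally on the $\sigma$-field generated by $X$ together with the histogram noise: on the event $\{\kappa(\nu\|\mu)\le K\}$ this bound is in force, while conditionally $\Omega^*$ is still an i.i.d.\ sample from $\mu$ and $Y$ an i.i.d.\ sample from $h^*$, so the conditional analysis applies (the extra factor $2$ in the condition on $m$ accommodates the slack). A union bound over the constantly many failure events, with the constants in the sample-size conditions chosen so that each fails with small enough probability, then yields $8\delta$-accuracy with probability at least $1-3\gamma-\exp(-cn)$.

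It remains to bound $\kappa(\nu\|\mu)=\sum_{x\in\Omega}\nu(x)^2/\mu(x)$. Writing $\hat\nu$ for the empirical measure of $X$, I would compare $\mu$ with $\hat\nu$ and $\hat\nu$ with $\nu$. For the sampling error, a multinomial (chi-squared-type) concentration bound shows that with probability at least $1-\gamma$ the quantity $\sum_x(\hat\nu(x)-\nu(x))^2/\nu(x)$ is of order $(|\Omega|+\log(1/\gamma))/n$; combined with $\min_x\nu(x)=\tau_1$ and $n\ge\frac{8}{\tau_1^2}(|\Omega|+2\log\frac1\gamma)$, this keeps $\hat\nu(x)\ge\tau_1/2$ uniformly in $x$ — the complementary event, handled by a multiplicative Chernoff bound on the cell counts, contributes the $\exp(-cn)$ term — and produces the first factor of $K$. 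For the privacy perturbation, the truncated and renormalized Laplace noise added in Algorithm~\ref{AlgWasZho} shifts each cell probability by at most $O(\log|\Omega|/n)$ (up to the privacy scale) with probability at least $1-\gamma$ by a union bound over the $|\Omega|$ cells; dividing by the lower bound $\tau_1/2$ on $\hat\nu(x)$, accounting for the non-uniformity factor $1+\tau_2|\Omega|$ and for the renormalization constant, and using $n$ at least the third quantity in the stated bound, yields the second factor of $K$. Multiplying the two contributions bounds $\kappa(\nu\|\mu)$ by $K$.

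The main obstacle is this last step: obtaining the exact dependence on $|\Omega|,\tau_1,\tau_2,n$ while passing through the truncation and renormalization of the perturbed histogram, operations which couple all the coordinates, and carving off a clean $\exp(-cn)$ tail for the event that some cell of $\hat\nu$ falls far below $\tau_1$. Elsewhere, the only genuine subtlety is the conditioning step in the accuracy reduction, which is forced by $\mu$ being data-dependent; the rest follows the template of Theorem~\ref{ThmBoeEtal} and~\cite{BoeEtal22}.
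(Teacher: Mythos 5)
Your overall architecture matches the paper's: privacy by basic composition of the $\epsilon$-private histogram release with the $\epsilon$-private mechanism of Theorem~\ref{ThmBoeEtal}(1), and accuracy by establishing $\kappa(\nu\|\mu)\le K$ with high probability and then invoking Theorem~\ref{ThmBoeEtal}(2). However, the route you sketch for bounding $\kappa(\nu\|\mu)$ is not the paper's, and the part you defer as ``the main obstacle'' is precisely the substantive content of the paper's proof. The paper does not bound $\sum_x \nu(x)^2/\mu(x)$ by a direct chi-squared computation; it writes $\kappa(\nu\|\mu)=e^{D_2(\nu\|\mu)}$ and applies Mironov's weak triangle inequality for R\'enyi divergence (Lemma~\ref{LemMironov}) through the empirical measure $\nu_1$, giving $D_2(\nu\|\mu)\le\tfrac32 D_4(\nu\|\nu_1)+D_3(\nu_1\|\mu)$, and then converts each term to a total-variation bound via Sason--Verd\'u (Lemma~\ref{LemSasVer}). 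This is exactly what produces the specific two-factor form $K=(1+C_1\delta_1)^{1/2}(1+C_2\delta_2)^{1/2}$ in the statement, with $\delta_1=TV(\nu,\nu_1)$ controlled by the $\ell_1$ concentration bound of Theorem~\ref{app:2} and $\delta_2=TV(\nu_1,\mu)$ controlled by the perturbation analysis; the lower bounds $\min_x\nu_1\ge\tau_1/2$ and $\min_x\mu\ge\tau_1/4$ (hence $\beta_1,\beta_2\ge 2/3$ via Lemma~\ref{lema1}) are needed to make the Sason--Verd\'u coefficients $O(1)$. Your direct decomposition $\frac{\nu^2}{\mu}=\frac{\nu^2}{\hat\nu}\cdot\frac{\hat\nu}{\mu}$ is plausible and might even give a sharper first factor (chi-squared concentration is of order $(|\Omega|+\log(1/\gamma))/n$ rather than its square root), but you do not show it yields a quantity bounded by the stated $K$, which is what the hypothesis $m\ge 2K|\mathcal F|/(\delta^2\gamma)$ requires you to connect to.

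The genuine gap is the truncation-and-renormalization analysis you name but do not carry out. The paper proves $\max_x|\nu_1(x)-\mu(x)|\le \frac{2\sigma(1+\tau_2|\Omega|)\log|\Omega|}{\gamma n}$ by (i) bounding the maximum of the $|\Omega|$ Laplace perturbations by $\frac{\sigma}{\gamma}\log|\Omega|$ via Lemma~\ref{l2} (a Markov-plus-expectation argument, not a union bound over cells); (ii) decomposing $\mu_j-\nu_1(j)$ into a numerator error $\varepsilon_1$ and a renormalization error $\varepsilon_2$ with $|\varepsilon_2|\le \frac{2\sigma|\Omega|\log|\Omega|}{\gamma n}$, which requires the sample-size condition to ensure the perturbed total mass stays above $n/2$; and (iii) a Chernoff bound giving $\sup_j\nu_1(j)\le\tau_2$ up to probability $\exp(-cn)$, which is where both the factor $(1+\tau_2|\Omega|)$ and the $\exp(-cn)$ term in the failure probability actually originate (not, as you suggest, from a lower-tail event for the empirical cells --- the lower bound $\min_x\nu_1\ge\tau_1/2$ is absorbed into the $1-\gamma$ event via Theorem~\ref{app:2}). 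Without this step neither factor of $K$ nor the stated failure probability is obtained, so as written the proposal does not establish the theorem. Your observation that Theorem~\ref{ThmBoeEtal}(2) is being applied with a data-dependent $\mu$ and therefore needs a conditioning argument is a fair and careful point --- the paper applies the theorem without comment --- but it does not substitute for the missing perturbation bound.
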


The proof is contained in Appendix~\ref{AppT1}. Note that the privacy parameter becomes $2\epsilon$ due to basic composition~\cite{DwoEtal14} of privately estimating $\mu$, followed by applying Algorithm~\ref{AlgBoeEtal}.

\begin{remark}
Theorem~\ref{t1} can be applied to Boolean datasets if we take $t=2$. Compared to the uniform sampling strategy proposed in Boedihardjo et al.~\cite{BoeEtal22} (see Theorem~\ref{ThmBoeEtalUni}), our requirement on $m$ is much smaller, since $K$ is smaller ($O(1)$ rather than $O(|\Omega|)$), leading to faster computation.
\end{remark}


It is instructive to understand how the quantities appearing in the conditions of Theorem~\ref{t1} scale with, e.g., the size of $|\Omega|$. Hence, we state the following simple corollary in the case where the underlying probability distribution $\nu$ is uniform, in which case we can take $\tau_1 = \tau_2 = \frac{1}{|\Omega|}$:

\begin{corollary}
\label{cor}
Suppose $\nu$ is uniform and
\begin{align}
    & n \geq \max \Bigg\{\frac{2|\mathcal{F}|}{\epsilon \delta} \log\left(\frac{|\mathcal{F}|}{\gamma}\right), \quad 8|\Omega|^3 + 16 |\Omega|^2 \log \frac{1}{\gamma}, \notag \\
    & \qquad \qquad \qquad \frac{16 \delta |\Omega|^2 \log |\Omega|}{\gamma \log(|\mathcal{F}|/\delta)} \Bigg\}, \nonumber \\
    & \min\{n, k\} \geq \frac{1}{\delta^2} \log\left(\frac{|\mathcal{F}|}{\gamma}\right), \quad m \geq \frac{2K|\mathcal{F}|}{\delta^2 \gamma}, \nonumber \\
    & \text{where} \nonumber \\
    & K = \left(1+C_1\left(\sqrt{\frac{2\log \frac{1}{\gamma}}{n}}+\sqrt{\frac{|\Omega|}{n}}\right)\right)^{\frac{1}{2}} \notag \\
    & \qquad \qquad \cdot \left(1+\frac{2C_2 \delta |\Omega| \log |\Omega|}{\gamma n \log (|\mathcal{F}|/\delta)} \right)^{\frac{1}{2}}.
\end{align}
Then Algorithm~\ref{AlgBoeEtal} outputs a $2\epsilon$-differentially private dataset $Y$ which, with probability at least $1 - 3\gamma - \exp(-cn)$, is $8\delta$-accurate.
\end{corollary}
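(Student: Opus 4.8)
The plan is to obtain Corollary~\ref{cor} as a direct specialization of Theorem~\ref{t1} to the uniform case, so the only real work is substitution and simplification. First I would verify that the uniform distribution $\nu$ on $\Omega$ satisfies the hypotheses of Theorem~\ref{t1}: since every atom has mass $1/|\Omega|$, we have $\tau_1 = \min_x \nu(x) = 1/|\Omega| > 0$ and $\tau_2 = \max_x \nu(x) = 1/|\Omega|$, so in particular the standing requirement $\tau_1 > 0$ holds, and $\mu$ may be obtained by the perturbed histogram method with privacy parameter $\epsilon$ exactly as in the theorem. The choice $\sigma = \delta/\log(|\mathcal{F}|/\gamma)$ and the conditions on $\min\{n,k\}$ and on $m$ in terms of $K$ carry over verbatim.

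Next I would substitute $\tau_1 = \tau_2 = 1/|\Omega|$ into the three expressions in the lower bound on $n$. The first term $\frac{2|\mathcal{F}|}{\epsilon\delta}\log(|\mathcal{F}|/\gamma)$ is unchanged; the second term becomes $\frac{8}{\tau_1^2}(|\Omega| + 2\log\frac{1}{\gamma}) = 8|\Omega|^2(|\Omega| + 2\log\frac{1}{\gamma}) = 8|\Omega|^3 + 16|\Omega|^2\log\frac{1}{\gamma}$; and, using $1/\tau_1 = |\Omega|$ together with $\tau_2|\Omega| = 1$ (so $1+\tau_2|\Omega| = 2$), the third term becomes $\frac{8\delta|\Omega|\cdot 2\cdot|\Omega|\log|\Omega|}{\gamma\log(|\mathcal{F}|/\delta)} = \frac{16\delta|\Omega|^2\log|\Omega|}{\gamma\log(|\mathcal{F}|/\delta)}$, matching the statement. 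Likewise, in the definition of $K$ the first factor is identical, while the second factor $\left(1 + \frac{C_2\delta|\Omega|(1+\tau_2|\Omega|)\log|\Omega|}{\gamma n\log(|\mathcal{F}|/\delta)}\right)^{1/2}$ collapses to $\left(1 + \frac{2C_2\delta|\Omega|\log|\Omega|}{\gamma n\log(|\mathcal{F}|/\delta)}\right)^{1/2}$ after the same substitution.

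After these substitutions the conclusion of Theorem~\ref{t1}---that Algorithm~\ref{AlgBoeEtal} outputs a $2\epsilon$-differentially private dataset $Y$ which is $8\delta$-accurate with probability at least $1 - 3\gamma - \exp(-cn)$---is precisely the claimed conclusion of the corollary, so nothing further is required. There is no substantive obstacle here beyond bookkeeping; the one point that warrants a moment of care is to apply the identity $\tau_2|\Omega| = 1$ consistently, since the factor $(1+\tau_2|\Omega|)$ occurs both in the lower bound on $n$ and in the formula for $K$ and is easy to carry through incorrectly.
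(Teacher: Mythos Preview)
Your proposal is correct and matches the paper's approach exactly: the corollary is obtained from Theorem~\ref{t1} by the substitution $\tau_1 = \tau_2 = 1/|\Omega|$, and the paper does not give a separate proof beyond noting this substitution. Your bookkeeping of the three lower-bound terms and of the factor $1+\tau_2|\Omega|=2$ in both the third term and in $K$ is accurate.
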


\begin{remark}
One might wonder why, if $\mu$ is already a privately estimated distribution with a guaranteed bound on its deviation from $\nu$, we cannot simply sample synthetic data from $\mu$ rather than running Algorithm~\ref{AlgBoeEtal}. Indeed, Wasserman and Zhou~\cite{WasZho10} propose to generate private synthetic data from $\mu$. If we recall our definition of accuracy (Definition~\ref{DefAccuracy}), we see that the deviation bounds between $\mu$ and $\nu$ are in terms of standard distributional distances such as TV and sup-norm. On the other hand, if our goal is to tightly control the supremum of empirical averages over a class $\mathcal{F}$, it may be advantageous to further solve an optimization problem, as stipulated in Algorithm~\ref{AlgBoeEtal}.
\end{remark}


\section{Sampling on a continuous domain using kernel density estimation}
\label{SecContinuous}

We now consider the case of a continuous domain, where $X$ is sampled from $[0,1]^p$. In order to choose $\mu$, we employ a private kernel density estimation method, proposed by Alda and Rubinstein~\cite{AldRub17}. For a dataset $X = (x_1, x_2, \ldots, x_n)$, consider the kernel density estimator
\begin{equation}
\label{EqnKDE}
\nu_1(x)=\frac{1}{nh^p}\sum_{i=1}^{n}K\left(\frac{H_0^{-\frac{1}{2}}(x-x_i)}{h}\right),
\end{equation}
where $K(u) = k(|u|)$, with $k$ a standard Gaussian kernel and $H = h^2H_0$, where $H_0 \in \real^{p \times p}$ and $h > 0$ is a bandwidth. We then apply the Bernstein mechanism (see Algorithm~\ref{AlgAldRub} in Appendix \ref{appcon}) with $F(\{x_i\}_{i=1}^n, x) = \nu_1(x)$. This leads to the following guarantees:

\begin{theorem}\label{t2}
Let $\epsilon > 0$, $\delta \in \left(0, \frac{1}{2}\right]$, and $\gamma \in \left(0, \frac{1}{4}\right)$. Suppose $X$ is sampled i.i.d.\ from $[0, 1]^p$
according to an $L$-Lipschitz probability measure $\nu$. Let $\sigma = \delta/ \log\left(|\mathcal{F}|/\gamma\right)$.
Suppose $\mu$ is obtained through the Bernstein mechanism with privacy parameter $\epsilon$, where we make the bandwidth choices $h=C^{'}\Big(\frac{\log n}{n}\Big)^\frac{1}{p}$ and $H_0 = I_p$. Suppose $\tau := \inf\limits_{x}\nu(x) > 0$ and
\begin{align}
    & n \geq \max \Bigg\{\frac{2|\mathcal{F}|}{\epsilon \delta} \log\left(\frac{|\mathcal{F}|}{\gamma}\right), \quad \left(\frac{3}{\tau} \right)^{p+1}\left(\frac{2}{\epsilon \sqrt{(2\pi)^p}} \log \frac{1}{\gamma}\right)\Bigg\}, \nonumber \\
    & \min\{n, k\} \geq \frac{1}{\delta^2} \log\left(\frac{|\mathcal{F}|}{\gamma}\right),  \quad m \geq \frac{2(1+ C_1' \tau)|\mathcal{F}|}{\delta^2 \gamma}. \nonumber
\end{align}
Then Algorithm~\ref{AlgBoeEtal} outputs a $2\epsilon$-differentially private dataset $Y$ which, with probability at least $1-\gamma-\frac{1}{n}$, is $8\delta$-accurate.
\end{theorem}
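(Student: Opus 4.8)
The plan is to invoke Theorem~\ref{ThmBoeEtal} as a black box and supply, from the continuous side, exactly two ingredients: (i) a privacy accounting step showing the overall mechanism is $2\epsilon$-DP, and (ii) a high-probability bound on the R\'enyi condition number $\kappa(\nu\|\mu)$ of the true density $\nu$ relative to the privately released KDE $\mu$. Privacy is immediate: the Bernstein mechanism of Alda and Rubinstein~\cite{AldRub17} releases $\mu$ with privacy budget $\epsilon$, and then Algorithm~\ref{AlgBoeEtal} run on $X$ with reference measure $\mu$ is $\epsilon$-DP by part (1) of Theorem~\ref{ThmBoeEtal}, provided $n \geq \frac{2|\mathcal{F}|}{\epsilon\delta}\log(|\mathcal{F}|/\gamma)$ (the first term in the $n$-lower bound). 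Since $\mu$ is a function of $X$, releasing $\mu$ and then running Algorithm~\ref{AlgBoeEtal} on $X$ are two adaptively composed mechanisms, so basic composition~\cite{DwoEtal14} gives $2\epsilon$-DP overall.

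The substantive work is bounding $\kappa(\nu\|\mu) = \int (d\nu/d\mu)^2\,d\mu = \int \nu^2/\mu$. I would split this into two stochastic steps. First, control the \emph{non-private} KDE $\nu_1$ from~\eqref{EqnKDE}: with an $L$-Lipschitz density $\nu$ on $[0,1]^p$ bounded below by $\tau>0$, and the bandwidth choice $h = C'(\log n/n)^{1/p}$, standard KDE bias/variance arguments (bias $O(Lh)$ from Lipschitzness, with a uniform deviation bound over $[0,1]^p$ of order $\sqrt{\log n/(nh^p)} = O(1)$ under this bandwidth, absorbed into constants) show that with probability at least $1 - 1/n$ we have $\sup_x |\nu_1(x) - \nu(x)|$ small relative to $\tau$, so $\nu_1 \geq \tau/2$ (say) uniformly and $\nu/\nu_1 \leq 1 + O(1/\tau)$. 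Second, control the privatization error $\mu - \nu_1$ introduced by the Bernstein mechanism: the utility guarantee of~\cite{AldRub17} bounds $\|\mu - \nu_1\|_\infty$ in terms of $1/(\epsilon\sqrt{(2\pi)^p})$ and $\log(1/\gamma)$ (the source of the second term $(3/\tau)^{p+1}\cdot\frac{2}{\epsilon\sqrt{(2\pi)^p}}\log(1/\gamma)$ in the $n$-bound — chosen precisely so this error is $\leq \tau/3$, keeping $\mu$ uniformly bounded below by a constant multiple of $\tau$). Combining, with probability at least $1 - \gamma - 1/n$, $\mu \geq c\tau$ uniformly, hence $\kappa(\nu\|\mu) = \int \nu^2/\mu \leq \frac{1}{c\tau}\int \nu^2 \leq \frac{\|\nu\|_\infty}{c\tau} \leq 1 + C_1'\tau$ after bookkeeping (using $\|\nu\|_\infty \leq 1 + Lp$ or similar from Lipschitzness on the unit cube, folded into $C_1'$). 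This yields $K = 1 + C_1'\tau$, matching the stated requirement $m \geq \frac{2K|\mathcal{F}|}{\delta^2\gamma}$.

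With $\kappa(\nu\|\mu) \leq K$ established on this good event, part (2) of Theorem~\ref{ThmBoeEtal} applies directly: the hypotheses $\min\{n,k\} \geq \frac{1}{\delta^2}\log(|\mathcal{F}|/\gamma)$ and $m \geq \frac{2K|\mathcal{F}|}{\delta^2\gamma} \geq \frac{K|\mathcal{F}|}{\delta^2\gamma}$ are exactly what we assumed, so with probability at least $1 - 4\gamma$ (conditioned on the good event for $\mu$) the output $Y$ is $8\delta$-accurate. A union bound over the KDE event, the privatization event, and the Algorithm~\ref{AlgBoeEtal} event gives overall success probability at least $1 - \gamma - \frac{1}{n}$ — here I would be slightly careful that the $4\gamma$ from Theorem~\ref{ThmBoeEtal} and the $\gamma$ from the Bernstein utility bound are reconciled by rescaling $\gamma$ by a constant inside the invocation, which only changes constants in the $n$, $k$, $m$ bounds; the final probability is reported as $1 - \gamma - 1/n$ after this reparametrization.

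The main obstacle I anticipate is the uniform lower bound on $\mu$: both the KDE deviation bound and the Bernstein utility bound must be made uniform over the continuum $[0,1]^p$ (not just pointwise or in $L^2$), and the dimension $p$ enters badly — the KDE variance term scales like $1/(nh^p)$ and the Bernstein noise like $1/\sqrt{(2\pi)^p}$, which is why the sample-size requirement carries the exponential factors $(3/\tau)^{p+1}$ and $1/\sqrt{(2\pi)^p}$. Getting the constants to line up so that $\mu \geq c\tau$ rather than merely $\mu > 0$ is the delicate part, since $\kappa$ blows up if $\mu$ can be arbitrarily small anywhere; this is exactly what the lower bound on $\tau$ (via $\tau := \inf_x \nu(x) > 0$) and the second term in the $n$-condition are engineered to prevent.
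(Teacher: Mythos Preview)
Your overall scaffold is correct and matches the paper closely: privacy by composition, a sup-norm bound $\sup_x|\nu_1(x)-\nu(x)|\le \tau/3$ for the non-private KDE (the paper invokes Jiang's uniform KDE bound, Theorem~\ref{ThmJiang}, with the same bandwidth $h=C'(\log n/n)^{1/p}$ and probability $1-1/n$), a Bernstein-mechanism utility bound $\sup_x|\mu(x)-\nu_1(x)|\le \tau/3$ (Theorem~\ref{thD2}, probability $1-\gamma$), and hence $\mu\ge \tau/3$ uniformly. Your concern about reconciling the $4\gamma$ from Theorem~\ref{ThmBoeEtal} with the stated $1-\gamma-1/n$ is also well-placed.

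The genuine gap is in your bound on $\kappa(\nu\|\mu)$. The chain
\[
\kappa(\nu\|\mu)=\int \frac{\nu^2}{\mu}\;\le\;\frac{1}{c\tau}\int \nu^2\;\le\;\frac{\|\nu\|_\infty}{c\tau}\;\le\;1+C_1'\tau
\]
fails at the last inequality: $\|\nu\|_\infty/(c\tau)$ is of order $1/\tau$ (e.g.\ $\|\nu\|_\infty\le \tau+L\sqrt{p}$ gives $1/c+L\sqrt{p}/(c\tau)$), which blows up as $\tau\to 0$ and cannot be written as $1+C_1'\tau$ with $C_1'$ independent of $\tau$. Merely knowing $\mu\ge c\tau$ is too crude; you are throwing away the key information that $\mu$ is \emph{close to $\nu$}, not just bounded below.

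The paper instead controls $\kappa=e^{D_2(\nu\|\mu)}$ via the R\'enyi triangle-type inequality (Lemma~\ref{LemMironov}),
\[
D_2(\nu\|\mu)\le \tfrac{3}{2}D_4(\nu\|\nu_1)+D_3(\nu_1\|\mu),
\]
and then bounds each term by Sason--Verd\'u (Lemma~\ref{LemSasVer}) in the form $\tfrac{1}{2}\log\bigl(1+\delta_i\cdot\frac{\beta_i^{-3}-1}{1-\beta_i}\bigr)$. Since $\delta_1,\delta_2\le \tau/3$ while $\beta_1\ge 2/3$ and $\beta_2\ge 1/2$ are absolute constants, each factor is $(1+C\tau)^{1/2}$, yielding $K\le (1+C_1'\tau)^{1/2}(1+C_2'\tau)^{1/2}$ as stated. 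The crucial point is that the $\delta_i$'s---not just the lower bound on $\mu$---appear \emph{multiplicatively} in front of the $\beta$-ratio, so small $\tau$ makes $K$ close to $1$ rather than large. If you want an elementary substitute, bounding $\kappa\le\sup_x \nu(x)/\mu(x)\le 1+(2\tau/3)/(\tau/3)=3$ works and gives a valid theorem, but with an absolute-constant $m$-bound rather than the $1+C_1'\tau$ form in the statement.
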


The proof is contained in Appendix~\ref{AppT2}.



Again, it is instructive to study the expressions appearing in Theorem~\ref{t2} in special cases. The following corollary specializes to the case when $\nu$ is uniform:

\begin{corollary}
Suppose $\nu$ is uniform and
     \begin{align}
    & n \geq \max \Bigg\{\frac{2|\mathcal{F}|}{\epsilon \delta} \log\left(\frac{|\mathcal{F}|}{\gamma}\right), 3^{p+1}\left(\frac{2}{\epsilon \sqrt{(2\pi)^p}} \log \frac{1}{\gamma}\right)\Bigg\}, \nonumber \\
    & \min\{n, k\} \geq \frac{1}{\delta^2} \log\left(\frac{|\mathcal{F}|}{\gamma}\right), \quad m \geq \frac{C_0|\mathcal{F}|}{\delta^2\gamma}.
\end{align}
Then Algorithm~\ref{AlgBoeEtal} outputs a $2\epsilon$-differentially private dataset $Y$ which, with probability at least $1-\gamma-\frac{1}{n}$, is $8\delta$-accurate.
\end{corollary}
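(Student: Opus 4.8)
The plan is to obtain this corollary as a direct specialization of Theorem~\ref{t2} to the uniform distribution, so the argument reduces to checking that the stated hypotheses imply those of Theorem~\ref{t2} once we set the relevant parameters. First I would observe that when $\nu$ is the uniform distribution on $[0,1]^p$, its density is identically equal to $1$; hence $\nu$ is $L$-Lipschitz with $L = 0$, and $\tau := \inf_x \nu(x) = 1 > 0$. Thus the standing regularity assumptions of Theorem~\ref{t2} (Lipschitz density, strictly positive infimum) are automatically satisfied, and we may invoke that theorem with the bandwidth choices $h = C'(\log n / n)^{1/p}$ and $H_0 = I_p$ already stipulated there.

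Next I would substitute $\tau = 1$ into each of the sample-size conditions of Theorem~\ref{t2} and match them against the corollary's hypotheses. The first lower bound on $n$, namely $\frac{2|\mathcal{F}|}{\epsilon\delta}\log(|\mathcal{F}|/\gamma)$, does not involve $\tau$ and carries over verbatim. The second lower bound $\left(\frac{3}{\tau}\right)^{p+1}\!\left(\frac{2}{\epsilon\sqrt{(2\pi)^p}}\log\frac{1}{\gamma}\right)$ becomes exactly $3^{p+1}\left(\frac{2}{\epsilon\sqrt{(2\pi)^p}}\log\frac{1}{\gamma}\right)$, which is precisely the second term in the corollary. The condition $\min\{n,k\} \geq \frac{1}{\delta^2}\log(|\mathcal{F}|/\gamma)$ is identical. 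Finally, the requirement $m \geq \frac{2(1 + C_1'\tau)|\mathcal{F}|}{\delta^2\gamma}$ reduces to $m \geq \frac{2(1 + C_1')|\mathcal{F}|}{\delta^2\gamma}$, so defining the absolute constant $C_0 := 2(1 + C_1')$ recovers the stated bound $m \geq \frac{C_0|\mathcal{F}|}{\delta^2\gamma}$. Having verified that the corollary's hypotheses imply those of Theorem~\ref{t2} in this case, the conclusions — that Algorithm~\ref{AlgBoeEtal} is $2\epsilon$-differentially private and, with probability at least $1 - \gamma - \frac{1}{n}$, outputs an $8\delta$-accurate synthetic dataset — follow immediately.

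Since this is a pure specialization, there is essentially no technical obstacle. The only point requiring a small amount of care is bookkeeping of constants: one must confirm that the constant $C_1'$ appearing in Theorem~\ref{t2} (which may itself implicitly depend on fixed quantities such as the kernel constant $C'$ or the dimension through the Bernstein-mechanism analysis) is legitimately absorbed into the new constant $C_0$, so that the bound on $m$ can be stated in the clean form given. No re-derivation of the privacy or accuracy analysis is needed, as those are inherited wholesale from Theorem~\ref{t2}.
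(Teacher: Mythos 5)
Your proposal is correct and is exactly the route the paper intends: the corollary is stated as a direct specialization of Theorem~\ref{t2}, obtained by noting that the uniform density on $[0,1]^p$ is Lipschitz (with $L=0$) and has $\tau = \inf_x \nu(x) = 1$, then substituting $\tau = 1$ into each hypothesis and absorbing $2(1+C_1')$ into the constant $C_0$. No further argument is needed.
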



\section{Applications}
\label{SecApplications}

We now describe several concrete examples of the theory we have developed in this paper. We also provide remarks regarding the practical implementation of the synthetic data generation mechanisms.

\subsection{Statistical tasks}

\paragraph{\textbf{Estimating binomial proportions.}} The first example we discuss is to estimate binomial proportions. The functions in $\mathcal{F}$ all have range $\{0,1\}$, and the theory developed in Section~\ref{SecDiscrete} (indeed, even in the original paper~\cite{BoeEtal22}) clearly applies. Suppose we wish to derive a confidence interval for some binomial proportion $\E[f(x_i)] = p^*$. An asymptotically valid confidence interval based on the original data would take the form $\frac{1}{n} \sum_{i=1}^n f(x_i) \pm O\left(\frac{1}{\sqrt{n}}\right)$. If we were instead to compute a confidence interval for $p^*$ based on the synthetic dataset, the accuracy guarantee implies that $\frac{1}{k} \sum_{j=1}^k f(y_j) \pm O\left(\frac{1}{\sqrt{n}}\right)$ is an asymptotically valid confidence interval. It is instructive to compare this with the standard confidence interval construction $\frac{1}{k} \sum_{j=1}^k f(y_j) \pm O\left(\frac{1}{\sqrt{k}}\right)$ that would be used if one treated the $y_j$'s as if they were i.i.d.\ observations, e.g., if a practitioner were given the dataset and did not realize the observations were synthetically generated. Thus, we see that if $k = o(n)$, the last confidence interval construction would still be asymptotically valid.

Extending the preceding example in a different direction, one might wish to perform a $\chi^2$ test based on private synthetic data. The usual $\chi^2$ statistic takes the form $n\sum_{\ell=1}^K \frac{(p_\ell - q_\ell)^2}{q_\ell}$, where the $q_\ell$'s are the expected proportions and the $p_\ell$'s are the observed proportions in each of $K$ categories. With synthetic data, the $p_\ell$'s, being binomial proportions of the form $\frac{1}{n} \sum_{i=1}^n f(x_i)$, would be approximated using the synthetic data proportions $\frac{1}{k} \sum_{j=1}^k f(y_j)$. In order to obtain an asymptotically valid $\chi^2$-test, we could bound the error of the two statistics in terms of the accuracy bound on the private proportions, and add a suitable correction.

\paragraph{\textbf{Covariance matrix estimation.}} Now suppose we have observations $x_i \in \real^p$, forming a data matrix $X \in \real^{n \times p}$. Also suppose the distribution of the $x_i$'s has bounded support. By assigning a function $f \in \mathcal{F}$ for each of the $p^2$ matrix entries, we can control the accuracy of the sample covariance matrix $\frac{X^TX}{n}$ based on a private synthetic dataset $Y \in \real^{m \times k}$. By standard results \cite{DwoEtal14a}, it follows that we can bound the spectral norm error $\left\|\frac{X^TX}{n} - \frac{Y^TY}{m}\right\|_2$. Matrix perturbation techniques \cite{SteSun90} then allow us to translate such an accuracy bound into deviation bounds on the eigenvectors/eigenvalues of the private covariance matrix relative to the original dataset. Thus, we can run a downstream machine learning task such as principal component analysis, and be guaranteed that the principal components are an accurate approximation to those computed from the non-private dataset.

Another immediate application is to graphical model estimation. Recall that for a multivariate Gaussian distribution, the edges of the corresponding graphical model correspond to nonzero entries of the inverse covariance matrix. As described in the preceding paragraph, the covariance matrix can be estimated accurately (in spectral norm) using a private synthetic dataset. By standard results \cite{BloEtal12, KarEtal11}, this means the inverse covariance matrix can also be estimated accurately, so the edge structure of a graphical model estimated from the private synthetic data would still closely estimate the edge structure of a graphical model estimated from the original dataset. Note that the assumption that the data are Gaussian would somewhat contradict the assumption for bounded data imposed in our theory, but we could either apply the theorems with a bound which grows logarithmically with the data matrix dimensions, or else rely on results which show that the inverse covariance matrix can still be used to estimate the edge structure of graphical models in certain (limited) discrete data settings~\cite{LohWai13}.

\paragraph{\textbf{$M$-estimation.}} First consider the setting of ordinary least squares linear regression, where the goal is to estimate an underlying parameter vector $\beta^* \in \real^p$ based on observations $(X, y)$, where $X \in \real^{n \times p}$ and $y \in \real^p$. When $X^TX$ is invertible, the least squares estimator takes the form $\hat{\beta}_{OLS} = \left(\frac{X^TX}{n}\right)^{-1} \frac{X^Ty}{n}$. As described in the preceding example, our theory allows us to control the accuracy of the covariance matrix $\frac{X^TX}{n}$ in private synthetic data. Similarly (assuming the domain of the $y_i$'s is bounded as well), we can bound the accuracy of $\frac{X^Ty}{n}$, leading to an overall bound on the deviation between the ordinary least squares estimator computed on the private synthetic dataset and the same estimator computed on the non-private dataset.

The assumption that the distribution of the $y_i$'s has bounded support naturally raises the question of whether guarantees could also be derived for methods such as logistic regression. The non-existence of a closed-form solution means we have to rely on techniques for general empirical risk minimization. Note that while we can choose an objective function which is bounded for different values of $\beta$, we need to guarantee that the objective function is accurate for all possible $\beta$ in the set over which we perform minimization. Thus, our theory could be applied, for instance, over a grid of values. If the grid were fine enough, Lipschitz properties of the loss function would imply that the accuracy of the objective function is uniformly bounded, say, on a compact set. Fortunately, many machine learning algorithms exhibit Lipschitz continuity~\cite{vonBou04, BubSel21, Kov22}. Thus, our method is applicable to a wide range of machine learning tasks. Furthermore, it is well-known that Lipschitz properties are beneficial from the point of view of robustness~\cite{HeiAnd17, CohEtal19}. The same techniques could be used, e.g., to obtain guarantees on a Lasso estimator computed on private data in the setting of high-dimensional linear regression, where one cannot obtain a closed-form solution for the estimator as in the case of ordinary least squares.


\subsection{Computational considerations}
\label{SecComp}

It is known that no polynomial-time algorithm exists which can generate privatize synthetic data and preserve all two-dimensional marginals up to accuracy $o(1)$ \cite{UllEtal11}. To avoid this, Boedihardjo et al.~\cite{BoeEtal22} assumed the data were drawn i.i.d.\ from a distribution, leading to an average-case rather than worst-case analysis. Note that Algorithm \ref{AlgBoeEtal} can be implemented as a linear program with $|\mathcal{F}|+m+1$ constraints and has runtime polynomial in $p$. Also, note that Algorithm~\ref{AlgWasZho} is polynomial in $n$, and Algorithm~\ref{AlgAldRub} is polynomial in both $n$ and $\frac{1}{\epsilon}$. In their follow-up paper, Boedihardjo et al.~\cite{BoeEtal22b} again worked on generating synthetic data by comparing marginals. They enhanced the upper bound on overall error-per-query to $O(\sqrt n p^4$) by minimizing a covariance loss instead of the difference in linear statistics. However, the latter algorithm exhibits exponential complexity in $p$. Their procedures \cite{BoeEtal22, BoeEtal22b} involve the addition of noise, which may introduce systematic error. In the paper \cite{BoeEtal22c}, they proposed a noiseless approach that also relies on preserving marginals; however, a limitation of this approach is that the runtime of the algorithm is stochastic.


\section{Conclusion}

In this paper, we have supplied methods for generating private synthetic data for both discrete and continuous distributions. Our theoretical gurantees hold under certain conditions on the underlying distributions (e.g., lower and upper bounds on the density over the support of the distribution, a Lipschitz property, or boundedness of the domain), and it would certainly be of interest to further study how these conditions might be relaxed. In particular, the case where the data are generated from a continuous distribution with support $[0,B]^p$ could be studied in a straightforward manner using the framework of analysis laid out in our paper. However, it would be useful to be able to derive results valid for unbounded domains, both in the discrete and continuous settings, as well.

Another worthwhile avenue to explore would be to relax the condition that the set $\mathcal{F}$ of functions is finite. As discussed in Section~\ref{SecApplications}, this is important for applications such as empirical risk minimization. Some general theoretical analysis has appeared recently~\cite{AsaLoh23} involving the combinatorial dimension of the function space; it would be natural to try to combine such results with our algorithmic framework and study the overall guarantees in specific applications.

A third direction would be to provide synthetic data generation mechanisms for other notions of privacy. For instance, if one wishes to satisfy the more general notion of $(\epsilon, \delta)$-differential privacy, one could add Gaussian rather than Laplace noise in the optimization step of Algorithm~\ref{AlgBoeEtal}. Several papers, including the follow-up work by Boedihardjo et al.~\cite{BoeEtal22c}, provide alternatives for private synthetic data generation that do not involve the addition of any noise, as do the papers~\cite{LuEtal17,AbaEtal19}. However, these noiseless approaches often lack rigorous accuracy~\cite{LuEtal17,AbaEtal19} or privacy~\cite{LuEtal17} guarantees.

We also mention the important statistical question regarding optimality of our approaches, and a rigorous characterization of tradeoffs between privacy and accuracy. Our theoretical results impose certain lower bounds on $(n,m,k)$, opening the door for a much more careful analysis where one tries to determine whether it is possible to devise alternative strategies for private synthetic data generation with smaller sample/data set sizes, say, if the goal is to achieve a private data set with a certain bound on the level of privacy and accuracy. In terms of optimality with regards to the synthetic dataset size $k$, this bears some resemblance to the notion of ``thinning" a dataset for optimal compression~\cite{DwiMac21}.

Lastly, our paper has focused entirely on theoretical aspects, but it would be interesting to see how our method performs in simulations. A concrete comparison would be to take the uniform-sampling version of Algorithm~\ref{AlgBoeEtal} proposed by Boedihardjo et al.~\cite{BoeEtal22} in the case of Boolean data and compare it with our version, which first privately estimates a discrete histogram.
It would be interesting to see whether the computational speedups are tangible, and how different private synthetic data generation algorithms perform with regard to accuracy over different function families $\mathcal{F}$.


\section*{Acknowledgments}

This work was supported by the Cantab Capital Institute for the Mathematics of Information via the Philippa Fawcett Internship programme (Faculty of Mathematics, University of Cambridge). The authors thank the anonymous reviewers for their feedback, which improved the clarity of the manuscript.


\newpage

\bibliographystyle{IEEEtran}
\bibliography{refs}

\begin{thebibliography}{10}
\providecommand{\url}[1]{#1}
\csname url@samestyle\endcsname
\providecommand{\newblock}{\relax}
\providecommand{\bibinfo}[2]{#2}
\providecommand{\BIBentrySTDinterwordspacing}{\spaceskip=0pt\relax}
\providecommand{\BIBentryALTinterwordstretchfactor}{4}
\providecommand{\BIBentryALTinterwordspacing}{\spaceskip=\fontdimen2\font plus
\BIBentryALTinterwordstretchfactor\fontdimen3\font minus
  \fontdimen4\font\relax}
\providecommand{\BIBforeignlanguage}[2]{{%
\expandafter\ifx\csname l@#1\endcsname\relax
\typeout{** WARNING: IEEEtran.bst: No hyphenation pattern has been}%
\typeout{** loaded for the language `#1'. Using the pattern for}%
\typeout{** the default language instead.}%
\else
\language=\csname l@#1\endcsname
\fi
#2}}
\providecommand{\BIBdecl}{\relax}
\BIBdecl

\bibitem{BoeEtal22}
M.~Boedihardjo, T.~Strohmer, and R.~Vershynin, ``Privacy of synthetic data: {A}
  statistical framework,'' \emph{IEEE Transactions on Information Theory},
  vol.~69, no.~1, pp. 520--527, 2022.

\bibitem{Dwo06}
C.~Dwork, ``Differential privacy,'' in \emph{International Colloquium on
  Automata, Languages, and Programming}.\hskip 1em plus 0.5em minus 0.4em\relax
  Springer, 2006, pp. 1--12.

\bibitem{DwoEtal06}
C.~Dwork, F.~McSherry, K.~Nissim, and A.~Smith, ``Calibrating noise to
  sensitivity in private data analysis,'' in \emph{Theory of Cryptography:
  Third Theory of Cryptography Conference}.\hskip 1em plus 0.5em minus
  0.4em\relax Springer, 2006, pp. 265--284.

\bibitem{NikEtal13}
A.~Nikolov, K.~Talwar, and L.~Zhang, ``The geometry of differential privacy:
  the sparse and approximate cases,'' in \emph{Proceedings of the 45th Annual
  ACM Symposium on Theory of Computing}, 2013, pp. 351--360.

\bibitem{NisEtal07}
K.~Nissim, S.~Raskhodnikova, and A.~Smith, ``Smooth sensitivity and sampling in
  private data analysis,'' in \emph{Proceedings of the 39th Annual ACM
  Symposium on Theory of Computing}, 2007, pp. 75--84.

\bibitem{LiEtal10}
C.~Li, M.~Hay, V.~Rastogi, G.~Miklau, and A.~McGregor, ``Optimizing linear
  counting queries under differential privacy,'' in \emph{ACM
  SIGMOD-SIGACT-SIGART Symposium on Principles of Database Systems}, 2010, pp.
  123--134.

\bibitem{MohEtal11}
N.~Mohammed, R.~Chen, B.~C. Fung, and P.~S. Yu, ``Differentially private data
  release for data mining,'' in \emph{ACM SIGKDD International Conference on
  Knowledge Discovery and Data Mining}, 2011, pp. 493--501.

\bibitem{DwoEtal14}
C.~Dwork and A.~Roth, ``The algorithmic foundations of differential privacy,''
  \emph{Foundations and Trends{\textregistered} in Theoretical Computer
  Science}, vol.~9, no. 3--4, pp. 211--407, 2014.

\bibitem{Blu13}
A.~Blum, K.~Ligett, and A.~Roth, ``A learning theory approach to noninteractive
  database privacy,'' \emph{Journal of the ACM}, vol.~60, no.~2, pp. 1--25,
  2013.

\bibitem{BarEtal07}
B.~Barak, K.~Chaudhuri, C.~Dwork, S.~Kale, F.~McSherry, and K.~Talwar,
  ``Privacy, accuracy, and consistency too: {A} holistic solution to
  contingency table release,'' in \emph{ACM SIGMOD-SIGACT-SIGART Symposium on
  Principles of Database Systems}, 2007, pp. 273--282.

\bibitem{HarRot10}
M.~Hardt and G.~N. Rothblum, ``A multiplicative weights mechanism for
  privacy-preserving data analysis,'' in \emph{2010 IEEE 51st Annual Symposium
  on Foundations of Computer Science}.\hskip 1em plus 0.5em minus 0.4em\relax
  IEEE, 2010, pp. 61--70.

\bibitem{JorEtal22}
J.~Jordon, L.~Szpruch, F.~Houssiau, M.~Bottarelli, G.~Cherubin, C.~Maple, S.~N.
  Cohen, and A.~Weller, ``Synthetic data--what, why and how?'' \emph{arXiv
  preprint arXiv:2205.03257}, 2022.

\bibitem{WasZho10}
L.~Wasserman and S.~Zhou, ``A statistical framework for differential privacy,''
  \emph{Journal of the American Statistical Association}, vol. 105, no. 489,
  pp. 375--389, 2010.

\bibitem{AldRub17}
F.~Alda and B.~Rubinstein, ``The {B}ernstein mechanism: {F}unction release
  under differential privacy,'' in \emph{AAAI Conference on Artificial
  Intelligence}, vol.~31, no.~1, 2017.

\bibitem{BoeEtal22b}
M.~Boedihardjo, T.~Strohmer, and R.~Vershynin, ``Covariance’s loss is
  privacy’s gain: {C}omputationally efficient, private and accurate synthetic
  data,'' \emph{Foundations of Computational Mathematics}, pp. 1--48, 2022.

\bibitem{BoeEtal22c}
------, ``Private sampling: {A} noiseless approach for generating
  differentially private synthetic data,'' \emph{SIAM Journal on Mathematics of
  Data Science}, vol.~4, no.~3, pp. 1082--1115, 2022.

\bibitem{BalEtal18}
M.~Balog, I.~Tolstikhin, and B.~Sch{\"o}lkopf, ``Differentially private
  database release via kernel mean embeddings,'' in \emph{International
  Conference on Machine Learning}.\hskip 1em plus 0.5em minus 0.4em\relax PMLR,
  2018, pp. 414--422.

\bibitem{HarEtal12}
M.~Hardt, K.~Ligett, and F.~McSherry, ``A simple and practical algorithm for
  differentially private data release,'' \emph{Advances in Neural Information
  Processing Systems}, vol.~25, 2012.

\bibitem{UllVad11}
J.~Ullman and S.~Vadhan, ``{PCP}s and the hardness of generating private
  synthetic data,'' in \emph{Theory of Cryptography Conference}.\hskip 1em plus
  0.5em minus 0.4em\relax Springer, 2011, pp. 400--416.

\bibitem{LuEtal17}
P.-H. Lu and C.-M. Yu, ``Poster: A unified framework of differentially private
  synthetic data release with generative adversarial network,'' in
  \emph{Proceedings of the 2017 ACM SIGSAC Conference on Computer and
  Communications Security}, 2017, pp. 2547--2549.

\bibitem{JorEtal18}
J.~Jordon, J.~Yoon, and M.~Van Der~Schaar, ``{PATE-GAN}: {G}enerating synthetic
  data with differential privacy guarantees,'' in \emph{International
  Conference on Learning Representations}, 2018.

\bibitem{XuEtal19}
C.~Xu, J.~Ren, D.~Zhang, Y.~Zhang, Z.~Qin, and K.~Ren, ``{GAN}obfuscator:
  {M}itigating information leakage under {GAN} via differential privacy,''
  \emph{IEEE Transactions on Information Forensics and Security}, vol.~14,
  no.~9, pp. 2358--2371, 2019.

\bibitem{BeaEtal19}
B.~K. Beaulieu-Jones, Z.~S. Wu, C.~Williams, R.~Lee, S.~P. Bhavnani, J.~B.
  Byrd, and C.~S. Greene, ``Privacy-preserving generative deep neural networks
  support clinical data sharing,'' \emph{Circulation: Cardiovascular Quality
  and Outcomes}, vol.~12, no.~7, p. e005122, 2019.

\bibitem{DocEtal23}
T.~Dockhorn, T.~Cao, A.~Vahdat, and K.~Kreis, ``Differentially private
  diffusion models,'' \emph{Transactions on Machine Learning Research}, 2023.

\bibitem{McKEtal21}
R.~McKenna, G.~Miklau, and D.~Sheldon, ``Winning the {NIST} contest: {A}
  scalable and general approach to differentially private synthetic data,''
  \emph{Journal of Privacy and Confidentiality}, vol.~11, no.~3, 2021.

\bibitem{LinEtal23}
Z.~Lin, S.~Gopi, J.~Kulkarni, H.~Nori, and S.~Yekhanin, ``Differentially
  private synthetic data via foundation model {API}s 1: {I}mages,'' in
  \emph{The Twelfth International Conference on Learning Representations},
  2023.

\bibitem{ThaEtal12}
J.~Thaler, J.~Ullman, and S.~Vadhan, ``Faster algorithms for privately
  releasing marginals,'' in \emph{International Colloquium on Automata,
  Languages, and Programming}.\hskip 1em plus 0.5em minus 0.4em\relax Springer,
  2012, pp. 810--821.

\bibitem{BarEtal21}
G.~Barthe, R.~Chadha, P.~Krogmeier, A.~P. Sistla, and M.~Viswanathan,
  ``Deciding accuracy of differential privacy schemes,'' \emph{Proceedings of
  the ACM on Programming Languages}, vol.~5, no. POPL, pp. 1--30, 2021.

\bibitem{DwoEtal14a}
C.~Dwork, K.~Talwar, A.~Thakurta, and L.~Zhang, ``Analyze {G}auss: {O}ptimal
  bounds for privacy-preserving principal component analysis,'' in
  \emph{Proceedings of the 46th Annual ACM Symposium on Theory of Computing},
  2014, pp. 11--20.

\bibitem{SteSun90}
G.~W. Stewart and J.~Sun, \emph{Matrix Perturbation Theory}.\hskip 1em plus
  0.5em minus 0.4em\relax Boston: Academic Press, 1990.

\bibitem{BloEtal12}
J.~Blocki, A.~Blum, A.~Datta, and O.~Sheffet, ``The {J}ohnson-{L}indenstrauss
  transform itself preserves differential privacy,'' in \emph{2012 IEEE 53rd
  Annual Symposium on Foundations of Computer Science}.\hskip 1em plus 0.5em
  minus 0.4em\relax IEEE, 2012, pp. 410--419.

\bibitem{KarEtal11}
V.~Karwa, S.~Raskhodnikova, A.~Smith, and G.~Yaroslavtsev, ``Private analysis
  of graph structure,'' \emph{Proceedings of the VLDB Endowment}, vol.~4,
  no.~11, pp. 1146--1157, 2011.

\bibitem{LohWai13}
P.~Loh and M.~J. Wainwright, ``Structure estimation for discrete graphical
  models: {G}eneralized covariance matrices and their inverses,'' \emph{The
  Annals of Statistics}, vol.~41, no.~6, p. 3022, 2013.

\bibitem{vonBou04}
U.~von Luxburg and O.~Bousquet, ``Distance-based classification with
  {L}ipschitz functions.'' \emph{Journal of Machine Learning Research}, vol.~5,
  no. Jun, pp. 669--695, 2004.

\bibitem{BubSel21}
S.~Bubeck and M.~Sellke, ``A universal law of robustness via isoperimetry,''
  \emph{Advances in Neural Information Processing Systems}, vol.~34, pp.
  28\,811--28\,822, 2021.

\bibitem{Kov22}
L.~V. Kovalev, ``Lipschitz clustering in metric spaces,'' \emph{The Journal of
  Geometric Analysis}, vol.~32, no.~7, p. 188, 2022.

\bibitem{HeiAnd17}
M.~Hein and M.~Andriushchenko, ``Formal guarantees on the robustness of a
  classifier against adversarial manipulation,'' \emph{Advances in Neural
  Information Processing Systems}, vol.~30, 2017.

\bibitem{CohEtal19}
J.~Cohen, E.~Rosenfeld, and Z.~Kolter, ``Certified adversarial robustness via
  randomized smoothing,'' in \emph{International Conference on Machine
  Learning}.\hskip 1em plus 0.5em minus 0.4em\relax PMLR, 2019, pp. 1310--1320.

\bibitem{UllEtal11}
J.~Ullman and S.~Vadhan, ``Pcps and the hardness of generating private
  synthetic data,'' in \emph{Theory of Cryptography Conference}.\hskip 1em plus
  0.5em minus 0.4em\relax Springer, 2011, pp. 400--416.

\bibitem{AsaLoh23}
A.~R. Asadi and P.~Loh, ``On the {G}ibbs exponential mechanism and private
  synthetic data generation,'' in \emph{2023 IEEE International Symposium on
  Information Theory (ISIT)}.\hskip 1em plus 0.5em minus 0.4em\relax IEEE,
  2023, pp. 2213--2218.

\bibitem{AbaEtal19}
N.~C. Abay, Y.~Zhou, M.~Kantarcioglu, B.~Thuraisingham, and L.~Sweeney,
  ``Privacy preserving synthetic data release using deep learning,'' in
  \emph{Machine Learning and Knowledge Discovery in Databases: European
  Conference, ECML PKDD 2018, Dublin, Ireland, September 10--14, 2018,
  Proceedings, Part I 18}.\hskip 1em plus 0.5em minus 0.4em\relax Springer,
  2019, pp. 510--526.

\bibitem{DwiMac21}
R.~Dwivedi and L.~Mackey, ``Kernel thinning,'' in \emph{Conference on Learning
  Theory}.\hskip 1em plus 0.5em minus 0.4em\relax PMLR, 2021, pp. 1753--1753.

\bibitem{Mir17}
I.~Mironov, ``R{\'e}nyi differential privacy,'' in \emph{2017 IEEE 30th
  Computer Security Foundations Symposium (CSF)}.\hskip 1em plus 0.5em minus
  0.4em\relax IEEE, 2017, pp. 263--275.

\bibitem{SasVer15}
I.~Sason and S.~Verd{\'u}, ``Upper bounds on the relative entropy and
  {R}{\'e}nyi divergence as a function of total variation distance for finite
  alphabets,'' in \emph{2015 IEEE Information Theory Workshop}.\hskip 1em plus
  0.5em minus 0.4em\relax IEEE, 2015, pp. 214--218.

\bibitem{BerKon12}
D.~Berend and A.~Kontorovich, ``On the convergence of the empirical
  distribution,'' \emph{arXiv preprint arXiv:1205.6711}, 2012.

\bibitem{Ver18}
R.~Vershynin, \emph{High-Dimensional Probability: An Introduction with
  Applications in Data Science}.\hskip 1em plus 0.5em minus 0.4em\relax
  Cambridge University Press, 2018, vol.~47.

\bibitem{Mic73}
C.~Micchelli, ``The saturation class and iterates of the {B}ernstein
  polynomials,'' \emph{Journal of Approximation Theory}, vol.~8, no.~1, pp.
  1--18, 1973.

\bibitem{Jia17}
H.~Jiang, ``Uniform convergence rates for kernel density estimation,'' in
  \emph{ICML}.\hskip 1em plus 0.5em minus 0.4em\relax PMLR, 2017, pp.
  1694--1703.

\end{thebibliography}

\newpage

\appendices


\section{Auxiliary results}
\label{AppAux}

\begin{lemma}\label{l2}
Let $X_1, X_2, \ldots, X_n$ be i.i.d.\ Lap$(\sigma)$ random variables, with probability density function $p_\sigma(x) \propto \exp\left(-\frac{|x|}{\sigma}\right)$, and define
\[X_{\max} = \max \{X_1, X_2, \ldots, X_n\}.\]
With probability at least $1-\gamma$, we have $X_{\max} < \frac{\sigma \log n}{\gamma}$.
\end{lemma}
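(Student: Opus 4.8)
The plan is to bound $X_{\max}$ by a union bound over the exponential right tail of the Laplace distribution. First I would record the one-sided tail of a single summand: since the normalized density is $p_\sigma(x) = \frac{1}{2\sigma}\exp(-|x|/\sigma)$, for every $t \ge 0$ we have
\[
\mprob[X_i \ge t] = \int_t^\infty \frac{1}{2\sigma}e^{-x/\sigma}\,dx = \frac{1}{2}e^{-t/\sigma}.
\]

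Next I would take a union bound over $i = 1, \dots, n$, giving $\mprob[X_{\max} \ge t] \le \frac{n}{2}e^{-t/\sigma}$ for all $t \ge 0$. Plugging in the candidate threshold $t = \sigma\log(n)/\gamma$ yields $\mprob[X_{\max} \ge t] \le \frac{1}{2}n^{1 - 1/\gamma}$, so the statement follows, after passing to the complementary event, once I establish the scalar inequality $\frac{1}{2}n^{1-1/\gamma} \le \gamma$.

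The remaining work is purely elementary. For $n = 1$ the threshold degenerates to $0$ and the claim is false, so one must use $n \ge 2$; this is harmless, since $n$ (or $|\mathcal{F}|$, in the applications of the lemma) is always at least $2$. If $\gamma \ge \frac{1}{2}$ the inequality is immediate, because $1 - 1/\gamma \le 0$ forces $n^{1-1/\gamma} \le 1$, so the left-hand side is at most $\frac{1}{2} \le \gamma$. If $\gamma < \frac{1}{2}$, I would take logarithms and use $\log n \ge \log 2$ to reduce the claim to $\gamma\log(1/\gamma) \le \log 2$, which holds because $\sup_{0 < \gamma < 1}\gamma\log(1/\gamma) = 1/e < \log 2$. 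I expect this one-line maximization (or, alternatively, the cruder estimate available when $\gamma \le \frac{1}{4}$, so that $1/\gamma - 1 \ge 3$) to be the only step worth flagging as an obstacle; the tail computation and the union bound are routine.
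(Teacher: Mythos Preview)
Your argument is correct (with the caveat $n\ge 2$ that you already flag), but it is not the route the paper takes. The paper does not use a union bound; instead it applies Markov's inequality to $X_{\max}$, reducing the claim to the expectation bound $\E[(X_{\max})_+]\le \sigma\log n$, which it then establishes by a telescoping computation of $E_i-E_{i-1}=\int_0^\infty\bigl[(1-\tfrac12 e^{-t/\sigma})^{i-1}-(1-\tfrac12 e^{-t/\sigma})^{i}\bigr]\,dt=\frac{\sigma}{i}(1-2^{-i})$ summed over $i=1,\dots,n$.

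Your approach is more direct and shorter: you exploit the explicit exponential tail $\mprob[X_i\ge t]=\tfrac12 e^{-t/\sigma}$ and pay only a factor $n$ via the union bound, then close with a one-variable calculus fact. The paper's approach is slightly heavier but yields the intermediate expectation bound $\E[(X_{\max})_+]\le \sigma\log n$ as a byproduct, which could in principle be reused; in this paper, however, only the tail statement is ever invoked, so your simplification loses nothing. Both proofs share the same silent restriction $n\ge 2$ (the paper's Markov step degenerates at the threshold $t=0$ when $n=1$), and in every application within the paper the number of Laplace variables is at least $|\Omega|\ge 2$.
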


\begin{proof}
Let $Z_i = \max\{X_1, \dots, X_i\}$ for each $i$. By Markov's inequality, we have
\begin{equation*}
\mprob\left(X_{\max} \ge \frac{\sigma \log n}{\gamma}\right) \le \frac{\E[Z_n]}{\frac{\sigma \log n}{\gamma}},
\end{equation*}
so it suffices to show that $\E[Z_n] \le \sigma \log n$.

Denoting $E_i = \E[Z_i]$, note that for $t > 0$, we can calculate the cdf
\begin{equation*}
F_{Z_i}(t) = \left(1-\frac{1}{2} \exp\left(-\frac{t}{\sigma}\right)\right)^i, \quad \text{for all } i,
\end{equation*}
so
\begin{align*}
E_i & = \int_0^\infty \left(1-F_{Z_i}(t)\right) dt \\
& = \int_{0}^{\infty} 1 - \left(1 - \frac{1}{2}e^{-t/\sigma}\right)^i dt.
\end{align*}
It follows that
\begin{align*}
E_i - E_{i-1} & = \int_{0}^{\infty} \frac{e^{-t/\sigma}}{2} \left(1 - \frac{e^{-t/\sigma}}{2}\right)^{i-1} dx\\
& = \frac{\sigma}{i} \left(1 - \left(\frac{1}{2}\right)^i \right).
\end{align*}
Since $E_0 = 0$, we can recursively obtain
\begin{align*}
E_n & = \sum_{i=1}^{n}\frac{\sigma}{i} \left(1 - \left(\frac{1}{2}\right)^i \right) \\
& \leq \sum_{i=1}^{n}\frac{\sigma}{i} \leq \sigma \int_{1}^{n}\frac{1}{x}\,dx = \sigma \log n.
\end{align*}
%
%
\end{proof}



\begin{lemma}\label{lema1}
For two probability distributions $p$ and $q$ supported on a finite set $A$, we have
$$\min\limits_{x\in A} \left(\frac{p(x)}{q(x)}\right) \geq \frac{1}{1+{TV(p, q)}/{\min\limits_{x \in A}p(x)}}.$$
\end{lemma}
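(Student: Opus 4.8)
The plan is to reduce to a pointwise statement. Set $m := \min_{y \in A} p(y)$, which is positive since $A$ is finite and (in the intended application) $p$ has full support. Then it suffices to prove $p(x)/q(x) \ge \bigl(1 + TV(p,q)/m\bigr)^{-1}$ for each individual $x \in A$, because the right-hand side does not depend on $x$, and minimizing the left-hand side over $x \in A$ recovers exactly the quantity in the claim. The one ingredient that makes this work is the elementary pointwise bound $q(x) - p(x) \le TV(p,q)$, valid for every $x \in A$. To see it, write $TV(p,q) = \frac{1}{2} \sum_{y\in A} |p(y) - q(y)|$; since $\sum_{y}(q(y)-p(y)) = 0$, the ``positive part'' $\sum_{y:\, q(y) > p(y)} (q(y)-p(y))$ equals $TV(p,q)$, so $q(x)-p(x)$ is a single summand of this nonnegative sum when $q(x) > p(x)$, and is $\le 0 \le TV(p,q)$ otherwise.

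Given this, I would argue uniformly in $x$ with no case split: from $q(x) \le p(x) + \bigl(q(x) - p(x)\bigr)_+ \le p(x) + TV(p,q)$ we obtain
$$\frac{p(x)}{q(x)} \;\ge\; \frac{p(x)}{p(x) + TV(p,q)} \;=\; \frac{1}{1 + TV(p,q)/p(x)} \;\ge\; \frac{1}{1 + TV(p,q)/m},$$
where the last inequality uses $p(x) \ge m > 0$ together with the monotonicity of $t \mapsto 1/(1+t)$ on $[0,\infty)$. Taking $\min_{x\in A}$ of the left-hand side then finishes the proof. (Alternatively, one can split into the trivial case $q(x) \le p(x)$, where $p(x)/q(x) \ge 1 \ge (1+TV(p,q)/m)^{-1}$, and the case $q(x) > p(x)$, where the same display applies verbatim.)

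I do not expect a genuine obstacle here; this is a short estimate. The only point that deserves care is the pointwise inequality $q(x) - p(x) \le TV(p,q)$: it should be obtained from the mass-balance argument above rather than a naive bound such as $|q(x)-p(x)| \le 2\,TV(p,q)$, which would yield a weaker constant, and the normalization of $TV$ must be kept consistent with the convention used elsewhere in the paper. Everything else reduces to the monotonicity of $t \mapsto 1/(1+t)$ and the definition of $m$.
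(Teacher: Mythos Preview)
Your proof is correct and follows essentially the same approach as the paper. The paper fixes the minimizer $y$ of $p(x)/q(x)$ up front and then applies exactly your chain $\frac{p(y)}{q(y)} \ge \frac{p(y)}{p(y)+\delta} = \frac{1}{1+\delta/p(y)} \ge \frac{1}{1+\delta/\min_x p(x)}$, using the characterization $\delta = \sup_{\mathcal{X}\subseteq A}|p(\mathcal{X})-q(\mathcal{X})| \ge |p(y)-q(y)|$ in place of your mass-balance argument; your pointwise version and subsequent minimization are an equivalent packaging of the same estimate.
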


\begin{proof}
Suppose $\min\limits_{x\in A} \left(\frac{p(x)}{q(x)}\right)=\frac{p(y)}{q(y)}$, where $y \in A$. Denote $\delta = TV(p, q)$. Since $\delta=\sup\limits_{\mathcal{X} \subseteq A} |p(\mathcal{X}) - q(\mathcal{X})|$, we have $\delta \geq |p(y)-q(y)|$. This implies that
\begin{align*}
\min_{x \in A} \frac{p(x)}{q(x)} & = \frac{p(y)}{q(y)} \ge \frac{p(y)}{p(y) + \delta} = \frac{1}{1+\frac{\delta}{p(y)}} \\
& \ge \frac{1}{1+{\delta}/{\min\limits_{x \in A}p(x)}}.
\end{align*} 
\end{proof}

For the following lemmas, recall that the R\'enyi divergence of order $\alpha$ is defined by
\begin{equation}
\label{EqnRenyiGeneral}
D_{\alpha}(P\|Q) = \frac{1}{\alpha -1} \log \left( \sum_{i=1}^{n} \frac{p_i^{\alpha}}{q_i^{\alpha -1}} \right).
\end{equation}

\begin{lemma}
\label{LemMironov}
[Corollary 4 of Mironov \cite{Mir17}]
Let $P$, $Q$, and $R$ be distributions with the same support. For $\alpha > 1$, we have
\[
D_\alpha(P\|Q) \leq \frac{\alpha - 1/2}{\alpha - 1} D_{2\alpha}(P\|R) + D_{2\alpha - 1}(R\|Q).
\]
\end{lemma}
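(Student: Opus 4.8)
The plan is to obtain this ``weak triangle inequality'' directly from the Cauchy--Schwarz inequality (equivalently, Hölder with both exponents equal to $2$), after unpacking each Rényi divergence into the power sum appearing in its definition~\eqref{EqnRenyiGeneral}. Write $p_i,q_i,r_i$ for the point masses of $P,Q,R$; since the three distributions share a common support, every $r_i$ occurring as a denominator below is strictly positive, so all the manipulations are legitimate, and note also that $2\alpha-1>1$ when $\alpha>1$, so $D_{2\alpha-1}(R\|Q)$ is well defined.

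First I would factor the summand that defines $D_\alpha(P\|Q)$ as
\[
\frac{p_i^{\alpha}}{q_i^{\alpha-1}} \;=\; \bigl(p_i^{\alpha}\, r_i^{\frac12-\alpha}\bigr)\bigl(r_i^{\alpha-\frac12}\, q_i^{1-\alpha}\bigr),
\]
and apply Cauchy--Schwarz to $\sum_i p_i^{\alpha}q_i^{1-\alpha}$ with this splitting, which gives
\[
\sum_i \frac{p_i^{\alpha}}{q_i^{\alpha-1}} \;\leq\; \left(\sum_i \frac{p_i^{2\alpha}}{r_i^{2\alpha-1}}\right)^{1/2}\left(\sum_i \frac{r_i^{2\alpha-1}}{q_i^{2\alpha-2}}\right)^{1/2}.
\]
The exponent $\tfrac12-\alpha$ on $r_i$ is engineered precisely so that squaring the two factors reproduces the power sums attached to $D_{2\alpha}(P\|R)$ and to $D_{2\alpha-1}(R\|Q)$: by~\eqref{EqnRenyiGeneral}, the first sum equals $\exp\!\big((2\alpha-1)D_{2\alpha}(P\|R)\big)$ and the second equals $\exp\!\big((2\alpha-2)D_{2\alpha-1}(R\|Q)\big)$.

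Substituting these into the Cauchy--Schwarz bound, taking logarithms, and then dividing through by $\alpha-1>0$ yields
\[
D_\alpha(P\|Q) \;\leq\; \frac{2\alpha-1}{2(\alpha-1)}\,D_{2\alpha}(P\|R) \;+\; D_{2\alpha-1}(R\|Q),
\]
and rewriting $\frac{2\alpha-1}{2(\alpha-1)}=\frac{\alpha-1/2}{\alpha-1}$ gives the stated inequality. There is no genuine obstacle in this argument; the only step that takes thought is reverse-engineering the exponent in the factorization from the target orders $2\alpha$ and $2\alpha-1$, after which everything is routine algebra. (Replacing Cauchy--Schwarz by Hölder with a general conjugate pair would produce an entire family of such bounds, but the symmetric choice is the one yielding exactly the constants quoted, which matches Corollary~4 of Mironov~\cite{Mir17}.)
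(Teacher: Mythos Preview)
Your argument is correct. The paper does not supply its own proof of this lemma; it simply quotes the result as Corollary~4 of Mironov~\cite{Mir17}. Your Cauchy--Schwarz derivation is exactly the standard route to this weak triangle inequality (and is the argument underlying Mironov's corollary): split $p_i^{\alpha}q_i^{1-\alpha}$ by inserting $r_i^{1/2-\alpha}r_i^{\alpha-1/2}$, square the two halves, and read off the power sums for $D_{2\alpha}(P\|R)$ and $D_{2\alpha-1}(R\|Q)$. Nothing further is needed.
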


\begin{lemma}
\label{LemSasVer}
[Theorem 3 of Sason and Verd\'u~\cite{SasVer15}]
Suppose two distributions $p$ and $q$ are supported on a finite set $A$. Let $\beta_1 = \min\limits_{x \in A} \frac{q(x)}{p(x)}$. For $\alpha \in [0,1) \cup (1,\infty)$, we have
\[
D_{\alpha}(P || Q) \leq \frac{1}{\alpha-1} \log \left( 1 + TV(p,q) \cdot \frac{(\beta_1^{1-\alpha}-1)}{1-\beta_1} \right).
\]
\end{lemma}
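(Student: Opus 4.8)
The plan is to work directly from the definition
\[
D_\alpha(P\|Q) \;=\; \frac{1}{\alpha - 1}\log\!\left(\sum_{x\in A} p(x)^\alpha\, q(x)^{1-\alpha}\right),
\]
and to control the inner sum $S := \sum_{x\in A} p(x)^\alpha q(x)^{1-\alpha}$ by rewriting it as an expectation under $p$. For $x$ in the support of $p$ set $t(x) := q(x)/p(x)$ and $f(t) := t^{1-\alpha}$, so that $S = \sum_x p(x)\, f(t(x)) = \E_p[f(t(x))]$, and note the two constraints $\E_p[t(x)] = \sum_x q(x) = 1$ and $t(x) \ge \beta_1$, together with $\beta_1 \le 1$ (since $\min_x q(x)/p(x) > 1$ is impossible for two probability distributions). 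The one piece of bookkeeping to carry throughout is the sign of $\tfrac{1}{\alpha-1}$: bounding $D_\alpha$ from above requires an \emph{upper} bound on $S$ when $\alpha > 1$ but a \emph{lower} bound on $S$ when $\alpha \in [0,1)$.

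The heart of the argument is the scalar chord inequality: for every $t \ge \beta_1$,
\[
f(t) \;\le\; 1 + \frac{(1-t)_+}{1-\beta_1}\bigl(\beta_1^{1-\alpha} - 1\bigr) \qquad \text{when } \alpha > 1,
\]
with the inequality reversed when $\alpha \in [0,1)$ (and the case $\beta_1 = 1$, which forces $p=q$, being trivial). To prove this, split into $t \ge 1$ and $\beta_1 \le t < 1$. On $\{t \ge 1\}$ the right-hand side equals $f(1) = 1$, and the claim is just monotonicity of $f$, which is decreasing for $\alpha > 1$ and increasing for $\alpha \in [0,1)$. On $[\beta_1, 1]$ the right-hand side is exactly the value at $t$ of the chord of $f$ through $\bigl(\beta_1,\beta_1^{1-\alpha}\bigr)$ and $(1,1)$; since $f''(t) = \alpha(\alpha-1)t^{-\alpha-1}$ is positive for $\alpha>1$ (so $f$ is convex and lies below its chords) and nonpositive for $\alpha\in[0,1)$ (so $f$ is concave and lies above its chords), the inequality holds in each regime. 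The degenerate case $\beta_1 = 0$ with $\alpha > 1$ makes the right-hand side $+\infty$, so nothing is needed there (indeed $D_\alpha = \infty$ when the supports fail to nest), while for $\beta_1 = 0$, $\alpha \in [0,1)$ the same chord argument applies with $f(\beta_1) = 0$.

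Finally, take expectations under $p$. Since $(1-t(x))_+$ is nonzero exactly on $\{x : q(x) < p(x)\}$, where it equals $(p(x)-q(x))/p(x)$, we get $\E_p[(1-t(x))_+] = \sum_{x:\,q(x)<p(x)}\bigl(p(x)-q(x)\bigr) = TV(p,q)$, whence
\[
S \;\le\; 1 + TV(p,q)\cdot\frac{\beta_1^{1-\alpha}-1}{1-\beta_1} \quad (\alpha > 1),
\]
and $S \ge 1 + TV(p,q)\cdot\frac{\beta_1^{1-\alpha}-1}{1-\beta_1}$ for $\alpha\in[0,1)$. Applying $\tfrac{1}{\alpha-1}\log(\cdot)$ — an increasing map for $\alpha>1$ and a decreasing one for $\alpha\in[0,1)$, which exactly undoes the reversed inequality in the second case — yields the stated bound in both regimes. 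I expect no real obstacle beyond this organizational point: one must keep the convexity/concavity of $f$, the monotonicity of $f$, and the sign of $\tfrac{1}{\alpha-1}$ aligned so that they all flip together and the single displayed inequality comes out pointing the right way.
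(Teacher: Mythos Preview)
The paper does not prove this lemma; it is quoted verbatim as Theorem~3 of Sason and Verd\'u and then invoked as a black box in Appendices~\ref{AppT1} and~\ref{AppT2}. So there is no ``paper's own proof'' to compare against.

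Your argument is correct and self-contained. The pointwise chord bound for $f(t)=t^{1-\alpha}$ on $[\beta_1,\infty)$ is exactly right: on $[\beta_1,1]$ it is the convexity/concavity dichotomy governed by the sign of $f''(t)=\alpha(\alpha-1)t^{-\alpha-1}$, and on $[1,\infty)$ it is the monotonicity of $f$; both flip direction simultaneously with the sign of $\alpha-1$, as does the map $s\mapsto\tfrac{1}{\alpha-1}\log s$, so the final inequality comes out the same way in both regimes. The identification $\E_p[(1-q/p)_+]=TV(p,q)$ is the standard one-sided form of total variation. The only small omission is a positivity check needed when $\alpha\in[0,1)$: you apply $\log$ to $1+TV(p,q)\cdot\tfrac{\beta_1^{1-\alpha}-1}{1-\beta_1}$, which is $1$ minus something, and you should note it stays positive. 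This follows from $TV(p,q)\le 1-\beta_1$ (since $1-q(x)/p(x)\le 1-\beta_1$ for every $x$), which gives $TV(p,q)\cdot\tfrac{1-\beta_1^{1-\alpha}}{1-\beta_1}\le 1-\beta_1^{1-\alpha}<1$.
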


\begin{theorem}
\label{app:2}
[Theorem 2 of Berend and Kontorovich~\cite{BerKon12}]
Let $X$ be a random variable supported on a discrete set of cardinality $k$, with density $p = (p_1, \dots, p_k)$, and let $X_1, X_2, \ldots, X_n$ be i.i.d.\ copies of $X$. Define the canonical estimator
\[\hat{p}_j^{(n)} = \frac{1}{n} \sum_{i=1}^{n} 1\{X_i=j\}, \quad \text{for all } 1 \le j \le k.\]
For every
$t \geq \sqrt{\frac{k}{n}}$, we have
\[
\mprob\left(\lvert \lvert \hat{p}_j^{(n)} - p_j\lvert \lvert_1 > t\right) \leq \exp\left(-\frac{n}{2} \left(t - \frac{k}{n}\right)^2\right).
\]
\end{theorem}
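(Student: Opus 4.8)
The plan is a two-stage argument: first bound the expectation $\E\bigl[\|\hat p^{(n)} - p\|_1\bigr]$, and then promote this to exponential concentration via a bounded-differences (McDiarmid) inequality applied to the i.i.d.\ sample $X_1,\dots,X_n$. This is the standard route for controlling $L_1$ deviations of an empirical distribution, and the hypothesis $t \ge \sqrt{k/n}$ together with the shape of the claimed bound signals that the offset appearing in the exponent is exactly the expectation estimate $\sqrt{k/n}$.

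Concretely, I would first note that for each $j$ the count $n\hat p_j^{(n)}$ is $\mathrm{Binomial}(n,p_j)$, so $\E\hat p_j^{(n)} = p_j$ and, by Jensen's inequality (equivalently Cauchy--Schwarz), $\E|\hat p_j^{(n)} - p_j| \le \sqrt{\mathrm{Var}(\hat p_j^{(n)})} = \sqrt{p_j(1-p_j)/n}$; summing over $j$ and invoking Cauchy--Schwarz in the form $\sum_{j=1}^k \sqrt{p_j(1-p_j)} \le \bigl(k\sum_j p_j(1-p_j)\bigr)^{1/2} \le \sqrt{k-1}$ yields $\E\bigl[\|\hat p^{(n)} - p\|_1\bigr] \le \sqrt{(k-1)/n} \le \sqrt{k/n}$. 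Next I would view $g(x_1,\dots,x_n) := \|\hat p^{(n)} - p\|_1$ as a function of the sample and observe that altering a single coordinate $x_\ell$ changes exactly two entries of $\hat p^{(n)}$, each by $1/n$, so $g$ has the bounded-differences property with constants $c_\ell = 2/n$ and $\sum_{\ell=1}^n c_\ell^2 = 4/n$. McDiarmid's inequality then gives $\mprob\bigl(g \ge \E[g] + s\bigr) \le \exp(-ns^2/2)$ for every $s > 0$. Finally, for $t \ge \sqrt{k/n} \ge \E[g]$ I would take $s = t - \sqrt{k/n} \ge 0$ and use $\{g \ge t\} \subseteq \{g \ge \E[g] + s\}$ to conclude $\mprob(g \ge t) \le \exp\bigl(-\tfrac n2 (t - \sqrt{k/n})^2\bigr)$, which is the asserted inequality.

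The only genuinely delicate point is the expectation estimate of Step~1: bounding the binomial mean absolute deviation $\E|\hat p_j^{(n)} - p_j|$ by the standard deviation is lossy by a constant factor, and extracting the sharp constant (the main technical content of Berend and Kontorovich) would require a direct analysis of $\E|X/n - p|$ for $X \sim \mathrm{Binomial}(n,p)$. Since only the crude $\sqrt{k/n}$ offset is needed downstream (it is combined with a slack of $\sqrt{2\log(1/\gamma)/n}$ when the theorem is applied), I would not pursue the optimal constant; the bounded-differences verification and the McDiarmid step are then routine once the right functional and per-coordinate difference constants have been identified.
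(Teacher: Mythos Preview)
Your argument is correct and is exactly the standard route to this inequality: bound $\E\|\hat p^{(n)}-p\|_1$ by $\sqrt{k/n}$ via Cauchy--Schwarz on the binomial standard deviations, then apply McDiarmid's inequality with per-coordinate oscillation $2/n$ to get $\exp(-ns^2/2)$ concentration above the mean. This is also the approach in Berend and Kontorovich, so there is no substantive difference to discuss. Note that the paper itself does not prove this statement at all; it is simply imported as an auxiliary result with a citation, so there is no ``paper's own proof'' to compare against.

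One small point worth flagging: the bound you actually derive has offset $\sqrt{k/n}$ in the exponent, i.e.\ $\exp\bigl(-\tfrac{n}{2}(t-\sqrt{k/n})^2\bigr)$, whereas the statement as printed in the paper has offset $k/n$. Your version is the correct one, it matches the hypothesis $t\ge\sqrt{k/n}$, and it is what the paper in fact uses downstream (the authors set $a=\sqrt{2\log(1/\gamma)/n}+\sqrt{|\Omega|/n}$ when applying the result). The $k/n$ in the displayed exponent is evidently a typo for $\sqrt{k/n}$; you have not missed anything.
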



\section{Uniform sampling}
\label{AppUniform}

In this section, we apply Algorithm~\ref{AlgBoeEtal} to a dataset whose elements are sampled from $\{0,1,\dots,t-1\}^p$. The main challenge is to find an upper bound for the R\'enyi condition number.

%

\begin{theorem} \label{ThmBoeEtalUni}
Let $\epsilon > 0$, $\delta \in (0, \frac{1}{2}]$, and $\gamma\in (0, \frac{1}{4})$. 
Suppose that the true data $X=(X_1, X_2, \ldots, X_n)$ are sampled independently from $\Omega=\{0,1,\dots,t-1\}^p$ according to an unknown probability measure $\nu$. Set $\sigma = \delta/\log\left(|\mathcal{F}|/\gamma\right)$.
Suppose
the probability measure $\mu$ is uniform over $\Omega$. Suppose
\begin{align*}
    &n \geq \frac{2|\mathcal{F}|}{\epsilon \delta} \log\left(\frac{|\mathcal{F}|}{\gamma}\right), \qquad \min\{n, k\} \geq \frac{1}{\delta^2} \log\left(\frac{|\mathcal{F}|}{\gamma}\right), \\
    & m \geq \frac{|\Omega||\mathcal{F}|}{\delta^2 \gamma}.
\end{align*}
Then Algorithm~\ref{AlgBoeEtal} outputs an $\epsilon$-differentially private dataset $Y$ which, with probability at least $1-4\gamma$, is $8\delta$-accurate.
\end{theorem}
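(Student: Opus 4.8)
The plan is to reduce Theorem~\ref{ThmBoeEtalUni} to part~(2) of Theorem~\ref{ThmBoeEtal} by exhibiting an explicit, distribution-free value of the R\'enyi condition number bound $K$ that holds whenever $\mu$ is uniform. Since $\Omega = \{0,1,\dots,t-1\}^p$ is finite and $\mu(x) = 1/|\Omega|$ for every $x \in \Omega$, the discrete form of the R\'enyi condition number (obtained from the definition $\kappa(\nu\|\mu) = \int (d\nu/d\mu)^2\, d\mu$) gives
\[
\kappa(\nu\|\mu) \;=\; \sum_{x \in \Omega} \frac{\nu(x)^2}{\mu(x)} \;=\; |\Omega| \sum_{x \in \Omega} \nu(x)^2 \;\le\; |\Omega| \max_{x \in \Omega}\nu(x) \;\le\; |\Omega|,
\]
where the penultimate step uses $\sum_x \nu(x)^2 \le \big(\max_x \nu(x)\big)\sum_x \nu(x) = \max_x \nu(x)$ and the last step uses $\max_x \nu(x) \le 1$. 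Hence $K = |\Omega|$ is an admissible choice in Theorem~\ref{ThmBoeEtal}.

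With $K = |\Omega|$ fixed, the remaining work is to check that the hypotheses of Theorem~\ref{ThmBoeEtal} are implied by those of the present statement. The choice $\sigma = \delta/\log(|\mathcal{F}|/\gamma)$ is the one required there; the privacy hypothesis $n \ge \frac{2|\mathcal{F}|}{\epsilon\delta}\log(|\mathcal{F}|/\gamma)$ is assumed verbatim; and the accuracy hypotheses $\min\{n,k\} \ge \frac{1}{\delta^2}\log(|\mathcal{F}|/\gamma)$ and $m \ge \frac{K|\mathcal{F}|}{\delta^2\gamma} = \frac{|\Omega||\mathcal{F}|}{\delta^2\gamma}$ are exactly the remaining two assumptions. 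Part~(1) of Theorem~\ref{ThmBoeEtal} then gives that Algorithm~\ref{AlgBoeEtal} is $\epsilon$-differentially private (no extra privacy budget is spent, since $\mu$ is data-independent), and part~(2), applied with this $K$ under the assumption $\kappa(\nu\|\mu) \le K$ verified above, gives that $Y$ is $8\delta$-accurate with probability at least $1 - 4\gamma$, which is precisely the claimed conclusion.

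There is no substantial technical obstacle beyond the condition-number estimate itself; the point of the theorem is that the crude worst-case inequality $\sum_x \nu(x)^2 \le 1$ already suffices to invoke the generic guarantee, at the price of forcing $m$ to scale linearly in $|\Omega| = t^p$. The one subtlety worth flagging in the write-up is that the bound on $\kappa(\nu\|\mu)$ must be \emph{uniform} over the unknown $\nu$, since $\mu$ (and the whole algorithm) is chosen before the data are seen; this is why one passes through $\sum_x \nu(x)^2 \le 1$ rather than any data-dependent quantity. If instead one had an a priori upper bound $\tau_2$ on $\max_x \nu(x)$, the same computation would yield the sharper $K \le |\Omega|\tau_2$, which is the kind of improvement exploited (with a privately estimated $\mu$) in Theorem~\ref{t1}.
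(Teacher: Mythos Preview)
Your proposal is correct and follows essentially the same approach as the paper: bound $\kappa(\nu\|\mu)=|\Omega|\sum_x \nu(x)^2\le |\Omega|$ and then invoke Theorem~\ref{ThmBoeEtal} with $K=|\Omega|$. The only cosmetic difference is that the paper bounds $\sum_x \nu(x)^2\le \sum_x \nu(x)=1$ via $\nu(x)^2\le \nu(x)$, whereas you route through $\max_x \nu(x)$; both are equivalent here.
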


\begin{proof}
First note that since $\mu(x) = \frac{1}{|\Omega|}$ for all $x$, we have
\begin{equation*}
\kappa(\nu \| \mu) = \sum_{x \in \Omega} \frac{\nu^2(x)}{\mu(x)} = |\Omega| \sum_{x \in \Omega} \nu^2(x) \le |\Omega| \sum_{x \in \Omega} \nu(x) = |\Omega|.
\end{equation*}
The statement then follows by applying Theorem~\ref{ThmBoeEtal}.
\end{proof}



\section{Discrete case}

\subsection{Sampling from a perturbed histogram}
\label{app:4}

We review an algorithm from Wasserman and Zhou~\cite{WasZho10}, used to privately estimate the density $\mu$ to be used as an input in Algorithm~\ref{AlgBoeEtal}. Note that Algorithm~\ref{AlgWasZho} was originally designed for estimating a continuous density on $[0,1]^p$; here, we adapt it to a discrete setting, where the number of bins is determined by the support of our distribution, rather than an appropriate discretization of the unit cube. Accordingly, for our purposes, we take the number of bins $m = |\Omega| = t^p$.

\begin{algorithm}
    \caption{Perturbed histogram~\cite{WasZho10}}\label{AlgWasZho}
    \begin{algorithmic}
        \STATE \textbf{Input:} Dataset $X=\left(x_1, \ldots, x_n\right)$, privacy parameter $\epsilon$
                \STATE 1. Partition $X$ into $m$ bins $B_1, B_2, \dots, B_m$, such that \\ each bin is a cube with sides of length $h=\frac{1}{m^p}$
                \STATE 2. Let $C_j$ denote the number of true data points falling within the bin $B_j$, and denote the probability measure before perturbation by $\nu_1(j)=C_j/n$
                \STATE 3. Add noise: Define $D_j=\max\{C_j+\omega_j, 0\}$,\\ where $\omega_1, \omega_2, \dots, \omega_m$ are i.i.d.\ draws from a Laplace \\ density $g(\omega)=\frac{\epsilon}{4}e^{-|\omega|\frac{\epsilon}{2}}$
                \STATE \textbf{Output:} The histogram estimate $Z=(\mu_1, \mu_2, \dots, \mu_m)$, where $\mu_j=D_j/\sum \limits_{s}{D_s}$
    \end{algorithmic}
\end{algorithm}




\subsection{Proof of Theorem~\ref{t1}}
\label{AppT1}

Let $\nu_1$ be the empirical distribution of $X$ before perturbation. Motivated by the proof of Theorem 4.4 in Wasserman and Zhou~\cite{WasZho10}, we first prove that
\begin{equation}
\label{EqnUniform}
\max \limits_x |\nu_1(x)-\mu(x)| \le \frac{2\sigma (1+\tau_2 |\Omega|) \log |\Omega|}{\gamma n},
\end{equation}
with high probability. The notation used is the same as in Algorithm \ref{AlgWasZho}.

Note that
\begin{equation}
\label{alg2 1}
\mu_j=\frac{(C_j+\omega_j)_+}{\sum \limits_s (C_s+\omega_s)_+} =\frac{(C_j+\omega_j)_+}{n}\frac{n}{\sum \limits_s (C_s+\omega_s)_+}.
\end{equation}
Let $M = \max\{|\omega_1|, |\omega_2|, \ldots, |\omega_m|\}$, and note that by Lemma~\ref{l2}, we have $M \le \frac{\sigma}{\gamma} \log |\Omega|$, with probability at least $1-\gamma$. On this event, we have
\begin{align*}
\nu_1(j) -\frac{|\omega_j|}{n} & \leq \nu_1(j) +\frac{\omega_j}{n}=\frac{C_j+\omega_j}{n} \\
& \leq \frac{(C_j+\omega_j)_+}{n}\leq \nu_1(j) +\frac{|\omega_1|}{n},
\end{align*}
so we obtain
\begin{equation}
\label{alg2 2}
\left|\frac{(C_j+\omega_j)_+}{n}-\nu_1(j)\right| \leq \frac{|\omega_j|}{n}\leq \frac{M}{n} \le \frac{\sigma \log |\Omega|}{\gamma n}.
\end{equation}
Similarly, we have
\[1-\frac{\sum_{s}|\omega_s|}{n} \leq 1+\frac{\sum_{s}\omega_s}{n} \leq \frac{\sum_s(C_s+\omega_s)_+}{n} \leq 1+\frac{\sum_{s}|\omega_s|}{n},\]
so we obtain
\begin{align}
\label{alg2 3}
    \left|\frac{\sum_s(C_s+\omega_s)_+}{n}-1\right| & \leq \frac{\sum_{s}|\omega_s|}{n} \le \frac{M|\Omega|}{n} \notag \\
& \le \frac{\sigma |\Omega| \log |\Omega|}{\gamma n}.
\end{align}
Note that under the assumptions, we have $\frac{\sigma |\Omega| \log |\Omega|}{\gamma n} \le \frac{1}{2}$, so $\frac{\sum_s (C_s + \omega_s)_+}{n} \ge \frac{1}{2}$.

If we define
\begin{align*}
\varepsilon_1 & := \frac{(C_j + \omega_j)_+}{n} - \nu_1(j), \\
\varepsilon_2 & := \frac{n}{\sum_s (C_s + \omega_s)_+} - 1,
\end{align*}
we see that
\begin{align}
\label{EqnPrelim}
|\mu_j - \nu_1(j)| & = \left|(\nu_1(j) + \varepsilon_1)(1+\varepsilon_2) - \nu_1(j)\right| \notag \\
& = |\nu_1(j) \varepsilon_2 + \varepsilon_1 + \varepsilon_1 \varepsilon_2|,
\end{align}
and inequalities~\eqref{alg2 2} and~\eqref{alg2 3} imply that
\begin{align*}
|\varepsilon_1| & \le \frac{\sigma \log |\Omega|}{\gamma n}, \\
|\varepsilon_2| & \le \frac{n}{\sum_s (C_s + \omega_s)_+} \left|\frac{\sum_s (C_s + \omega_s)_+}{n} - 1\right| \\
& \le \frac{2\sigma |\Omega| \log |\Omega|}{\gamma n}.
\end{align*}

Note that by a Chernoff bound~\cite{Ver18}, we have $\nu_1(j) \le \tau_2$ for each $j$, with probability at least $1-\exp(-c_0 n)$. Taking a union bound and noting that $n \gg \log |\Omega|$ by assumption, we see that $\sup_j \nu_1(j) \le \tau_2$ with probability at least $1 - \exp(-cn)$. Applying the triangle inequality to equation~\eqref{EqnPrelim}, we then have
\begin{align*}
|\mu_j - \nu_1(j)| & \le 2|\varepsilon_1| + \tau_2 |\varepsilon_2| \\
& \le \frac{2\sigma(1+\tau_2 |\Omega|) \log |\Omega|}{\gamma n},
\end{align*}
which is the bound~\eqref{EqnUniform}.

Now define $\delta_1= TV(\nu,\nu_1)$, $\delta_2=TV(\mu,\nu_1)$,  $\beta_1=\min\limits_{x} \left(\frac{\nu_1(x)}{\nu(x)}\right)$, and $\beta_2=\min\limits_x \left(\frac{\mu(x)}{\nu_1(x)}\right)$. 
Recall the formula~\eqref{EqnRenyiGeneral}, and note that we are interested in the case $\alpha = 2$. From Lemmas~\ref{LemMironov} and~\ref{LemSasVer}, provided in Appendix~\ref{AppAux}, we have
\begin{align}
\label{eq1}
& D_2(\nu || \mu) \leq \frac{3}{2}D_4(\nu || \nu_1) + D_3(\nu_1 || \mu) \notag \\
&  \leq \frac{1}{2} \log \left(1+\delta_1\frac{\beta_1^{-3}-1}{1-\beta_1}\right) + \frac{1}{2} \log \left(1+\delta_2\frac{\beta_2^{-3}-1}{1-\beta_2}\right).
\end{align}


From Lemma \ref{lema1}, we have
\begin{align*}
\beta_1 & \geq \frac{1}{1+\delta_1/\min\limits_{x \in \Omega}\nu_1(x)}, \\
\beta_2 & \geq \frac{1}{1+\delta_2/\min\limits_{x \in \Omega}\mu(x)}.
\end{align*}

\subsubsection{Upper bound on $\delta_1$}

By Theorem~\ref{app:2}, we have
\begin{equation*}
    P(\lVert \nu_1 - \nu \rVert_1 > a) \leq e^{-\frac{n}{2}(a-\sqrt{|\Omega|/n})^2}, \text{ for } a \geq \sqrt{|\Omega|/n}.
\end{equation*}
Setting $a=\sqrt{\frac{2\log \frac{1}{\gamma}}{n}}+\sqrt{\frac{|\Omega|}{n}}$, we obtain 
\begin{equation}\label{eq2}
    \delta_1 = \frac{1}{2}||\nu_1 - \nu||_1  \leq \frac{1}{2}\left(\sqrt{\frac{2\log \frac{1}{\gamma}}{n}}+\sqrt{\frac{|\Omega|}{n}}\right),
\end{equation}
with probability at least $1-\gamma$. \\

\subsubsection{Lower bound on $\min_x \nu_1(x)$}

By the triangle inequality, for every $ x\in X $, we have
\begin{equation*}\label{inqnu1}
\nu_1(x) \geq \nu(x) - |\nu_1(x) - \nu(x)| \geq \nu(x) - \|\nu-\nu_1\|_1.
\end{equation*}
From inequality~\eqref{eq2}, we deduce that
\begin{align*}
\min \limits_{x} \nu_1(x) &\geq \tau_1 - \sqrt{\frac{2\log \frac{1}{\gamma}}{n}} - \sqrt{\frac{|\Omega|}{n}}
\geq \frac{\tau_1}{2},
\end{align*}
provided
\begin{equation*}
n \geq \frac{8}{\tau_1^2}\left(|\Omega| + 2\log \frac{1}{\gamma}\right).
\end{equation*}
This also implies
\begin{equation*}
\beta_1 \ge \frac{1}{1 + \frac{\tau_1/4}{\tau_1/2}} = \frac{2}{3}.
\end{equation*}


\subsubsection{Upper bound on $\delta_2$}

Using the bound~\eqref{EqnUniform}, we have
\begin{align*}
\delta_2 & = \frac{1}{2}\lVert \nu_1 - \mu \rVert_1 \leq \frac{\sigma |\Omega|(1+\tau_2 |\Omega|) \log |\Omega|}{\gamma n} \\
& = \frac{\delta |\Omega| (1 + \tau_2 |\Omega|) \log |\Omega|}{\gamma n \log (|\mathcal{F}|/\delta)}.
\end{align*}


\subsubsection{Upper bound on $\min_x \mu(x)$}

By the triangle inequality, for every $x$, we obtain
\begin{equation*}
\label{inqmu}
\mu(x) \geq \nu_1 (x) - |\mu(x) - \nu_1(x)| \geq \nu_1 (x) - \|\nu_1 - \mu\|_1.
\end{equation*}
Thus, for
\begin{equation*}
n \geq \frac{8\delta |\Omega|(1+\tau_2|\Omega|) \log |\Omega|}{\tau_1 \gamma \log(|\mathcal{F}|/\delta)},
\end{equation*}
we have
\begin{equation*}
\min_x \mu(x) \ge \frac{\tau_1}{2} -  \frac{2\delta |\Omega|(1+\tau_2 |\Omega|) \log |\Omega|}{\gamma n \log (|\mathcal{F}|/\delta)} \ge \frac{\tau_1}{4}.
\end{equation*}
%
%
%
%
This also implies
\begin{equation*}
\beta_2 \ge \frac{1}{1 + \frac{\tau_1/8}{\tau_1/4}} = \frac{2}{3}.
\end{equation*}


\subsubsection{Upper bound on $K= e^{D_2(\nu\parallel\mu)} $}

Now we incorporate our bounds:
\begin{equation*}\label{d}
\begin{split}
K = e^{D_2(\nu||\mu)}
\leq \left(1+C_1 \left(\sqrt{\frac{2\log \frac{1}{\gamma}}{n}}+\sqrt{\frac{|\Omega|}{n}}\right)\right)^{\frac{1}{2}} \\
 \qquad \cdot \left(1+\frac{C_2\delta |\Omega| (1+\tau_2 |\Omega|) \log |\Omega|}{\gamma n \log (|\mathcal{F}|/\delta)}\right)^{\frac{1}{2}},
\end{split}
\end{equation*}
with probability at least $1-3\gamma - \exp(-cn)$, by a union bound.
\subsection{Performance evaluation}
In this section, we outline the performance of our method in simulations and present a comparison between the uniform-sampling version of Algorithm~\ref{AlgBoeEtal} proposed by Boedihardjo et al.~\cite{BoeEtal22} for Boolean data and our version, which initially privately estimates a discrete histogram. Here is the setup:
\begin{itemize}
\item  $X$ is uniformly sampled from $\Omega=\{0,1\}^p$ 
\item Family of test functions $\mathcal{F}$ encodes all $d$-dimensional marginals (as described in the Section~\ref{SecBackground}) 
\item  Parameters: $d = 2$, $\epsilon = 0.2$, $\delta = 0.5$, $\gamma = 0.25$
\item The sizes of $n$, $k$, and $m$ are equal to their respective lower bounds as stated in Theorem~\ref{ThmBoeEtalUni} (for uniform sampling) and Corollary~\ref{cor} (for perturbed histogram).
\item The Python library `pulp' is utilized to solve optimization problems (the simplex method is used)
\end{itemize}
Note that computational speedups are tangible, and both methods perform similarly in terms of accuracy (Fig.~\ref{fig:execution_time} and~\ref{fig:error}).
 \begin{figure}[htbp]
    \centering
    \includegraphics[width=0.41\textwidth]{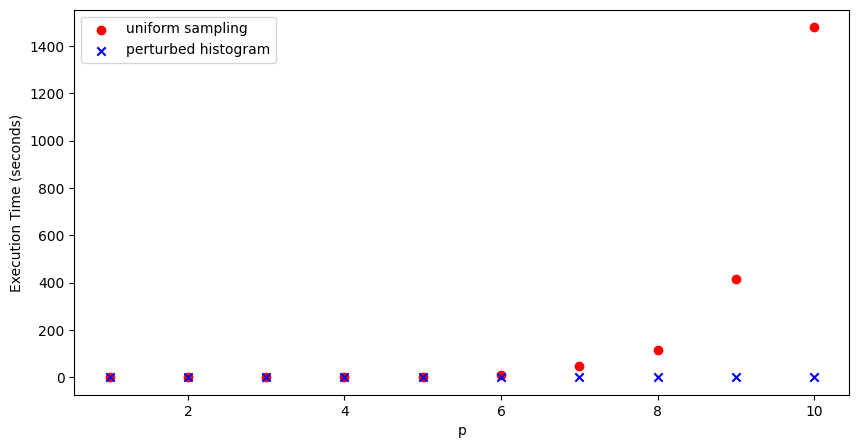}
    \caption{\footnotesize Comparison of Execution Time}
    \label{fig:execution_time}
\end{figure}

\begin{figure}
    \centering
    \includegraphics[width=0.41\textwidth]{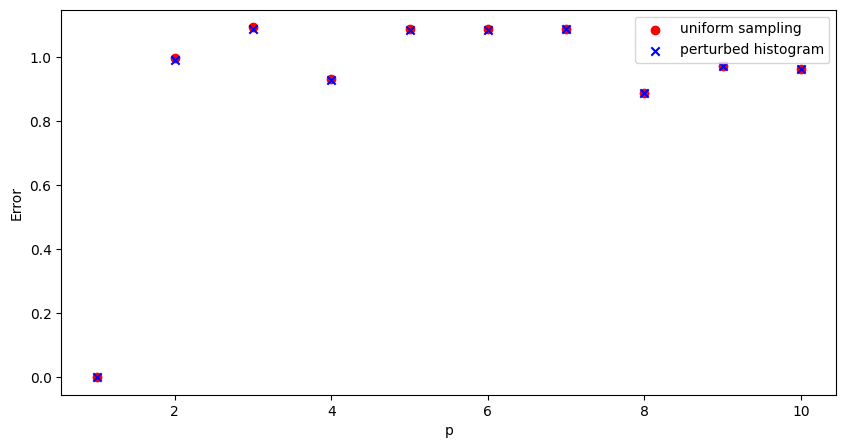}
    \caption{\footnotesize Comparison of Error}
    \label{fig:error}
\end{figure}
\section{Continuous case}
\label{appcon}
\subsection{The Bernstein mechanism}
\label{alg 3}
The Bernstein mechanism (Algorithm~\ref{AlgAldRub}) offers a solution to privately approximate a function $F_D$ by utilizing iterated Bernstein basis polynomials and perturbing the coefficients with noise. Recall that the \emph{sensitivity} of a function $f: \Omega^n \rightarrow \mathbb{R}^d$ is given by
$$S(f)=\sup_{ X \sim X'} \| f(X)- f(X') \|_1,$$
where the supremum is taken over all $X, X' \in \mathcal{X}^n$ that differ in one entry~\cite{DwoEtal06}. The sensitivity of a function $F: \Omega^n \times \mathcal{Y} \rightarrow \mathbb{R}^d$ is defined as $S(F)=\sup\limits_{y \in \mathcal{Y}} S(F(\cdot, y))$. 

\begin{definition}
 Let $k$ be a positive integer. The Bernstein basis polynomials of degree $k$ are defined as $b_{\ell, k}(y)=\binom{k}{\ell} y^\ell(1-$ $y)^{k-\ell}$, for $\ell=0, \ldots, k$.
 \end{definition}
 
 \begin{definition}
      Let $f:[0,1] \rightarrow \mathbb{R}$ and let $k$ be a positive integer. The Bernstein polynomial of $f$ of degree $k$ is defined as $B_k(f ; y)=\sum_{\ell=0}^k f(\ell / k) b_{\ell, k}(y)$.
      
 If $h$ is a positive integer, the iterated Bernstein operator of order $h$ is defined as the sequence of linear operators
 $$B_k^{(h)}=I-\left(I-B_k\right)^h=\sum_{i=1}^h\binom{h}{i}(-1)^{i-1} B_k^i,$$
 where $I=B_k^0$ denotes the identity operator and $B_k^i$ is defined inductively as $B_k^i=B_k \circ B_k^{i-1}$ for $i \geq 1$.
\end{definition}

It is known~\cite{Mic73} that the iterated Bernstein polynomial of order $h$ can be computed as
$$
B_k^{(h)}(f ; y)=\sum_{\ell=0}^k f\left(\frac{\ell}{k}\right) b_{\ell, k}^{(h)}(y),
$$
where $b_{\ell, k}^{(h)}(y)=\sum_{i=1}^h\binom{h}{i}(-1)^{i-1} B_k^{i-1}\left(b_{\ell, k} ; y\right)$.

\begin{definition}
 Assume $f:[0,1]^{p} \rightarrow \mathbb{R}$ and let $k_1, \ldots, k_{p}$, and $h$ be positive integers. The  iterated Bernstein polynomial of $f$ of order $h$ is defined as
$$
B_{k_1, \ldots, k_p}^{(h)}(f ; y)=\sum_{j=1}^p \sum_{\ell_j=0}^{k_j} f\left(\frac{\ell_1}{k_1}, \cdots, \frac{\ell_p}{k_p}\right) \prod_{i=1}^p b_{\ell_i, k_i}^{(h)}\left(y_i\right).
$$
\end{definition}

\begin{algorithm}
    \caption{Bernstein mechanism~\cite{AldRub17}}\label{AlgAldRub}
    \begin{algorithmic}
        \STATE \textbf{Input:} Dataset $X \in \left([0,1]^p\right)^n$, target function $F: \left([0,1]^p\right)^n \times [0,1]^p \rightarrow \real$ with sensitivity $S(F)$, \\  cover size $k$, Bernstein order $j$, privacy budget $\epsilon > 0$
                \STATE 1. Define the lattice cover $P = \left\{0, \frac{1}{k}, \frac{2}{k}, \ldots, 1\right\}^p$ \\ of $[0,1]^p$
                \STATE 2. Calculate the perturbation scale $\lambda = \frac{S(F)(k + 1)}{\epsilon}$
                \STATE 3. For each $p \in P$, perturb $F(X,p)$\\ by adding i.i.d.\ Laplace noise: $\widetilde{F}(X,p) \leftarrow F(X,p) + Z$, where $Z \sim \text{Lap}(\lambda)$
                \STATE \textbf{Output:} Function estimate $\sum_{j=1}^p \sum_{\ell_j=0}^k \widetilde{F}\left(X, \frac{\ell_1}{k}, \cdots, \frac{\ell_p}{k}\right) \prod_{i=1}^p b_{\ell_i, k}^{(j)}\left(y_i\right)$    \end{algorithmic}
\end{algorithm}


\subsection{Kernel Density Estimation}
\label{app ker}

We will need the following result about the convergence of kernel density estimators.

\begin{theorem} [Theorem 1 of Jiang~\cite{Jia17}] \label{ThmJiang}
Suppose $X_1, \dots, X_n$ are i.i.d.\ samples from a distribution $\nu$. Define
\begin{equation*}
\begin{split}
f_1(r) = \nu(x) - \inf\limits_{x' \in B(x,r)}\nu(x'), \\
f_2(r) = \sup\limits_{x' \in B(x,r)}{\nu(x')} - \nu(x).
\end{split}
\end{equation*}
Let $H_0$ be a symmetric, positive definite matrix such that $\|H_0\|_2 = 1$, and define $H=h^2H_0$, where $h \geq \big(\log n/n\big)^{\frac{1}{p}}$. With probability at least $1-\frac{1}{n}$, we have
    \begin{equation} \label{eqker1}
       \nu_1(x) \leq \nu(x) + \alpha + C\sqrt{\frac{\log n}{n h^p}}+64p^2k(0)\frac{\log n}{n h^p}
    \end{equation}
if $\int_{\mathbb{R}^p} K(u) f_2(h|u|/\sigma_p(H_0)) \, du \leq \alpha,$ and
\begin{equation}\label{eqker2}
    \nu_1(x) \geq \nu(x) - \alpha - C\sqrt{\frac{\log n}{n h^p}}-64p^2k(0)\frac{\log n}{n h^p}
\end{equation}
if $\int_{\mathbb{R}^p} K(u)f_1(h|u|/\sigma_p(H_0)) \, du  \leq \alpha $, where $C = 8p\sqrt{v_d \|\nu\|_\infty}\int_{0}^{\infty} k(t) t^{\frac{p}{2}}\,dt$, and $\sigma_1(H_0) \geq \ldots \geq \sigma_p(H_0) > 0$ denote the eigenvalues of $H_0$.
\end{theorem}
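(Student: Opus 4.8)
The plan is to prove this through the classical bias--variance decomposition for kernel density estimators, combined with Bernstein's inequality. I would write $\nu_1(x) - \nu(x) = (\nu_1(x) - \E\nu_1(x)) + (\E\nu_1(x) - \nu(x))$ and control the deterministic (bias) and random (stochastic) parts separately. Applying the same argument to $-\nu_1(x)$ reduces the two-sided claim \eqref{eqker1}--\eqref{eqker2} to a one-sided bound, with the hypothesis on $f_2$ governing the upper tail \eqref{eqker1} and the hypothesis on $f_1$ governing the lower tail \eqref{eqker2}.

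For the bias, I would change variables via $u = H_0^{-1/2}(x-y)/h$ in $\E\nu_1(x) = h^{-p}\int K(H_0^{-1/2}(x-y)/h)\,\nu(y)\,dy$ (the normalization being chosen so that $\int_{\real^p} K(u)\,du = 1$), obtaining $\E\nu_1(x) - \nu(x) = \int K(u)\big(\nu(x - hH_0^{1/2}u) - \nu(x)\big)\,du$. Since $\|H_0^{1/2}u\|_2 \le \|u\|_2 \le |u|/\sigma_p(H_0)$ --- the first inequality from $\sigma_1(H_0) = \|H_0\|_2 = 1$ and the second since $\sigma_p(H_0) \le 1$ --- the point $x - hH_0^{1/2}u$ lies in $B(x, h|u|/\sigma_p(H_0))$, so the integrand is at most $K(u)f_2(h|u|/\sigma_p(H_0))$ (respectively at least $-K(u)f_1(h|u|/\sigma_p(H_0))$) because $f_1,f_2$ are non-decreasing in the radius. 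Under the stated integral conditions the bias is therefore at most $\alpha$ in absolute value.

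For the stochastic term I would write $\nu_1(x) - \E\nu_1(x) = n^{-1}\sum_{i=1}^n (W_i - \E W_i)$ with $W_i := h^{-p}K(H_0^{-1/2}(x - X_i)/h)$ i.i.d., nonnegative, and bounded by $M := k(0)/h^p$ since the Gaussian profile $k$ is maximized at the origin. From $W_i^2 \le M W_i$ I would bound $\mathrm{Var}(W_i) \le \E W_i^2 \le h^{-p}\|\nu\|_\infty\int K(u)^2\,du$, then evaluate $\int K^2$ in polar coordinates using the explicit Gaussian form so that, after combining with the sample size, the square-root contribution carries precisely the constant $C = 8p\sqrt{v_p\|\nu\|_\infty}\int_0^\infty k(t)t^{p/2}\,dt$. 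Bernstein's inequality then gives $\mprob(\nu_1(x) - \E\nu_1(x) > t) \le \exp\big(-\tfrac{nt^2/2}{\mathrm{Var}(W_i) + Mt/3}\big)$, and choosing $t = C\sqrt{\log n/(nh^p)} + 64p^2 k(0)\log n/(nh^p)$ makes the right-hand side at most $1/n$: the variance part forces the square-root summand of $t$, and the range $M = k(0)/h^p$ forces the linear summand, with $64p^2$ absorbing the factor $1/3$ together with the dimensional constants dropped in the variance estimate. The hypothesis $h \ge (\log n/n)^{1/p}$ guarantees $nh^p \ge \log n$, so the square-root term dominates and the tail bound is informative. Combining the bias estimate with this high-probability bound on the stochastic fluctuation yields \eqref{eqker1}, and the symmetric argument yields \eqref{eqker2}.

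The main obstacle I anticipate is bookkeeping rather than conceptual: carrying the dimension-dependent constants exactly through the polar-coordinate evaluation of $\int K^2$ and through Bernstein's inequality so that they collapse to $C$ and $64p^2 k(0)$ precisely as stated, and tracking the $\sigma_p(H_0)$ factor when bounding the anisotropic displacement in the bias step. One subtlety worth flagging is whether the conclusion is intended pointwise in $x$ or uniformly over $x \in [0,1]^p$; in the uniform case one would additionally cover $[0,1]^p$ by a fine grid and take a union bound, with the resulting metric-entropy term (of order $p\log(\cdot)$) absorbed into the $\log n$ factors, which is consistent with the $p$-dependence appearing in the constants.
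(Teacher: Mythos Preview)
The paper does not contain a proof of this statement at all: Theorem~\ref{ThmJiang} is quoted verbatim from Jiang~\cite{Jia17} as an auxiliary input to the proof of Theorem~\ref{t2}, and the authors simply cite it without reproducing any argument. So there is nothing in the paper to compare your proposal against.

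That said, your outline is the standard route to results of this type and is essentially what Jiang does: split $\nu_1(x)-\nu(x)$ into a bias term handled by the change of variables $u=H_0^{-1/2}(x-y)/h$ together with the monotonicity of $f_1,f_2$, and a stochastic term handled by a Bernstein-type concentration bound using $|W_i|\le k(0)/h^p$ and $\mathrm{Var}(W_i)\le h^{-p}\|\nu\|_\infty\int K^2$. Two small points of bookkeeping you glossed over: (i) the change of variables contributes a Jacobian factor $\sqrt{\det H_0}$, which must be absorbed into the normalization of $K$ for $\E\nu_1$ to equal $\int K(u)\nu(x-hH_0^{1/2}u)\,du$; and (ii) the paper applies the conclusion as a \emph{uniform} bound $\sup_x|\nu_1(x)-\nu(x)|\le\cdots$, so the version actually needed is the uniform one you flag at the end, which in Jiang's proof is obtained via a covering of the support combined with a union bound, the resulting $p\log n$ entropy cost being what produces the dimension-dependent prefactors $8p$ and $64p^2$.
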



\subsection{Proof of Theorem~\ref{t2}}
\label{AppT2}


The kernel density estimator $\nu_1$ inherits all continuity and differentiability properties of the kernel $K$, making it a smooth function with derivatives of all orders. Furthermore, the sensitivity can easily be bounded (cf.\ \textit{Examples, Kernel Density Estimation} in Section 6 of Alda and Rubinstein~\cite{AldRub17}). This leads to the following result:

\begin{theorem}[Theorem 3 of Ald\`a and Rubinstein~\cite{AldRub17}] \label{thD2}
Let $\gamma \in (0, 1/4]$.
With probability at least $1-\gamma$, we have
\[
\sup_x |\nu_1(x) - \mu(x)| \leq O\left(\left( \frac{1}{h n \varepsilon \sqrt{(2\pi)^p}} \log \frac{1}{\gamma} \right)^{\frac{j}{p+j}}\right),
\]
where $\mu$ is the release of $\nu_1$ obtained via Algorithm~\ref{AlgAldRub}. This mechanism ensures $\epsilon$-differential privacy.
\end{theorem}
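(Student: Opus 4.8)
The plan is to follow the structure of the argument behind Theorem 3 of Ald\`a and Rubinstein~\cite{AldRub17}, establishing the privacy and the accuracy claims separately for the Bernstein mechanism of Algorithm~\ref{AlgAldRub} run with $F(\{x_i\}_{i=1}^n, x) = \nu_1(x)$, the kernel density estimator of~\eqref{EqnKDE}. For privacy, observe that the released function $\mu$ is a deterministic (post-processing) map of the perturbed lattice values $\widetilde F(X,q) = F(X,q) + Z_q$, $q \in P$, where the $Z_q$ are i.i.d.\ Laplace at the scale $\lambda$ prescribed by the algorithm, so it suffices to show the vector $\big(\widetilde F(X,q)\big)_{q\in P}$ is $\epsilon$-differentially private. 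This is the Laplace mechanism: changing one data point leaves every $Z_q$ untouched and perturbs each evaluation $F(X,q)$ by at most the single-point sensitivity $S(\nu_1)$, which for the Gaussian kernel (so that $0 \le K \le (2\pi)^{-p/2}$) satisfies $S(\nu_1) \le \frac{1}{nh^p(2\pi)^{p/2}}$ --- the origin of the $\sqrt{(2\pi)^p}$ factor in the statement. Hence the total $\ell_1$-perturbation of the released vector is at most $|P|\,S(\nu_1)$, and since $\lambda$ is chosen as $|P|\,S(\nu_1)/\epsilon$ (up to the normalization in Algorithm~\ref{AlgAldRub}), the vector is $\epsilon$-DP, and post-processing transfers this to $\mu$.

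For accuracy, write $B$ for the order-$j$ iterated Bernstein operator associated with the lattice $P$ and decompose
\[
\sup_x |\nu_1(x) - \mu(x)| \;\le\; \sup_x |\nu_1(x) - B(\nu_1;x)| \;+\; \sup_x |B(\nu_1;x) - \mu(x)|.
\]
The second (noise) term equals $\sup_x\big|\sum_{q\in P} Z_q\,b^{(j)}_q(x)\big|$, where $b^{(j)}_q$ denotes the tensor product of iterated Bernstein basis polynomials indexed by the lattice point $q$, and is therefore at most $\big(\max_{q\in P}|Z_q|\big)\cdot\sup_x\sum_{q\in P}|b^{(j)}_q(x)|$; the last supremum is the Lebesgue constant of the order-$j$ iterated Bernstein operator, a constant depending only on $j$ and $p$ --- equal to $1$ when $j=1$ by the partition-of-unity property of the Bernstein basis --- a fact I would quote from approximation theory. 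A standard Laplace tail bound together with a union bound over the $|P| = (k+1)^p$ lattice points (cf.\ Lemma~\ref{l2}) gives $\max_{q\in P}|Z_q| = O\big(\lambda\log(|P|/\gamma)\big)$ with probability at least $1-\gamma$, so the noise term is $O\big((k+1)^p\,S(\nu_1)\,\epsilon^{-1}\log\frac{(k+1)^p}{\gamma}\big)$.

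The first (approximation) term is bounded by the classical uniform error estimate for the order-$j$ iterated Bernstein operator: $\nu_1$ inherits the $C^\infty$ smoothness of the Gaussian kernel, so $\sup_x|\nu_1(x) - B(\nu_1;x)| = O(k^{-j})$ with the implicit constant controlled by the sup-norms of the partial derivatives of $\nu_1$ up to order $2j$ (finite for a bandwidth-$h$ Gaussian estimator, with the usual dependence on kernel constants and negative powers of $h$). Writing the approximation term as $\asymp A\,k^{-j}$ and the noise term as $\asymp N\,k^{p}$, where $N \asymp \frac{1}{nh^p\epsilon\sqrt{(2\pi)^p}}\log\frac1\gamma$ up to logarithmic factors in $k$, the choice $k\asymp(A/N)^{1/(p+j)}$ equalizes the two contributions and yields a bound of order $(A^{p}N^{j})^{1/(p+j)}$; absorbing the bounded $A$-dependence into $O(\cdot)$ produces exactly the advertised rate $O\!\left(\big(\frac{1}{hn\epsilon\sqrt{(2\pi)^p}}\log\frac1\gamma\big)^{j/(p+j)}\right)$, and a union bound over the two events gives the overall probability $1-\gamma$.

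I expect the accuracy half to be the main obstacle. The delicate points are: tracking how the order-$2j$ derivatives of $\nu_1$ --- and hence the constant $A$ in the Bernstein approximation estimate --- depend on $j$ and on $1/h$; confirming the uniform boundedness of the iterated-Bernstein Lebesgue constants; and carrying the logarithmic $k$-dependence of the noise term through the optimization over cover size cleanly so that it is absorbed into the $O(\cdot)$. The privacy half is essentially routine once the single-point sensitivity of the Gaussian kernel density estimator has been computed.
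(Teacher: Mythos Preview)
The paper does not supply its own proof of this statement: it is quoted verbatim as Theorem~3 of Ald\`a and Rubinstein~\cite{AldRub17}, preceded only by the two remarks that $\nu_1$ inherits the smoothness of the Gaussian kernel and that its sensitivity is easily bounded (with a pointer to Section~6 of~\cite{AldRub17}). Your proposal correctly fleshes out precisely these two ingredients into the standard argument---Laplace-mechanism privacy via the sensitivity bound $S(\nu_1)=O\big((nh^p(2\pi)^{p/2})^{-1}\big)$, and the approximation/noise decomposition balanced over the cover size $k$ using the $O(k^{-j})$ rate for the iterated Bernstein operator on smooth targets---which is exactly how the result is obtained in~\cite{AldRub17}. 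So your approach is correct and coincides with what the paper invokes by citation; there is nothing further to compare.

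One minor caution: the scale $\lambda$ in Algorithm~\ref{AlgAldRub} as transcribed in the paper reads $S(F)(k+1)/\epsilon$, whereas your privacy argument (correctly) needs the full $\ell_1$-sensitivity $|P|\,S(F)=(k+1)^p S(F)$; this is a transcription issue in the paper rather than a flaw in your reasoning, and you already flag it with ``up to the normalization.'' Similarly, the $h$ in the displayed bound should on dimensional grounds be $h^p$, consistent with your noise computation; again this is the paper's transcription, not your argument.
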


As in the proof of Theorem~\ref{t1}, we use the notation $\delta_1=$ TV ($\nu,\nu_1$), $\delta_2=$ TV ($\mu,\nu_1$),  $\beta_1=\min\limits_{x} \left(\frac{\nu_1(x)}{\nu(x)}\right)$, and $\beta_2=\min\limits_x \left(\frac{\mu(x)}{\nu_1(x)}\right)$. We will also use inequality~\eqref{eq1} to bound the R\'{e}nyi divergence.

\subsubsection{Bounds on $\min_{x} \nu_1(x)$ and $\delta_1$}

Applying Theorem~\ref{ThmJiang} (noting that $k(0)=1/(2\pi)^{p/2}$), we have
\begin{equation*} \label{inker1}
        \sup \limits_{x}|\nu(x)-\nu_1(x)| \leq \alpha + C_1 \sqrt{\frac{\log n}{n h^p}}+64p^2\frac{\log n}{n h^p \sqrt{(2\pi)^p}},
\end{equation*}
with probability at least $1 - \frac{1}{n}$.
Let $\alpha=\frac{\tau}{9}$ and $h=C^{'}\Big(\frac{\log n}{n}\Big)^\frac{1}{p}$, such that  $\frac{C_1}{\sqrt{C^{'}}} < \frac{\tau}{9}$ and $\frac{64p^2}{\big({C^{'}\sqrt{(2\pi)}}\big)^p}< \frac{\tau}{9}$. Plugging into Theorem~\ref{thD2}, we obtain the following inequalities with probability $1-\frac{1}{n}$ (apply the triangle inequality as shown in Appendix \ref{inqnu1}):
\begin{equation*}
  \delta_1=    \sup \limits_x |\nu(x)-\nu_1(x)| \leq \frac{\tau}{3} \implies \min \limits_{x} \nu_1(x) \geq \tau - \frac{\tau}{3} = \frac{2}{3}\tau.
\end{equation*}
\begin{claim}
We have
\[\sup \limits_x|\nu(x)-\nu_1(x)| \leq \frac{\tau}{3},\]
    with probability at least $1-\frac{1}{n}$, when $\frac{\log n}{n}\geq \max \left\{\big(\frac{\tau}{6C_1}\big)^2h^p,\frac{\tau h^p \sqrt{(2\pi)^p}}{384p^2}\right\}$.
\end{claim}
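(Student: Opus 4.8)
The plan is to derive the claimed sup-norm bound from Jiang's kernel density estimation guarantee (Theorem~\ref{ThmJiang}), specialized to our kernel and bandwidth, by splitting its error into a deterministic bias term and two stochastic fluctuation terms and then forcing their total below $\tau/3$.

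First I would apply Theorem~\ref{ThmJiang} with $H_0 = I_p$ and $h = C'(\log n/n)^{1/p}$. Since $H_0 = I_p$, all of its eigenvalues equal $1$, so $\|H_0\|_2 = 1$ and $\sigma_p(H_0) = 1$, and the theorem's bandwidth requirement $h \ge (\log n/n)^{1/p}$ holds for the (large) constant $C'$ fixed just above the claim. For the standard Gaussian kernel, $k(0) = (2\pi)^{-p/2} = 1/\sqrt{(2\pi)^p}$. Theorem~\ref{ThmJiang} then yields, with probability at least $1 - \frac{1}{n}$,
\[
\sup_x |\nu(x) - \nu_1(x)| \le \alpha + C_1\sqrt{\frac{\log n}{n h^p}} + \frac{64 p^2 \log n}{n h^p \sqrt{(2\pi)^p}},
\]
for any $\alpha$ dominating both $\int_{\real^p} K(u) f_1(h|u|)\,du$ and $\int_{\real^p} K(u) f_2(h|u|)\,du$, where $f_1, f_2$ are the one-sided local oscillations of $\nu$ from Theorem~\ref{ThmJiang} and $C_1 = 8p\sqrt{v_p \|\nu\|_\infty}\int_0^\infty k(t) t^{p/2}\,dt < \infty$ (finite because $\nu$ is continuous on the compact set $[0,1]^p$, hence bounded).

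Next I would bound the bias term $\alpha$ using that $\nu$ is $L$-Lipschitz: for every $x$ and every $r > 0$, $|\nu(x) - \nu(x')| \le L\|x - x'\|_2 \le Lr$ whenever $x' \in B(x,r)$, so $f_1(r), f_2(r) \le Lr$, and hence
\[
\int_{\real^p} K(u) f_i(h|u|)\,du \le Lh \int_{\real^p} |u| K(u)\,du = c_p L h,
\]
where $c_p := \int_{\real^p} |u| K(u)\,du$ is a finite, dimension-dependent constant (the first absolute moment of a $p$-dimensional standard Gaussian). Because $h = C'(\log n/n)^{1/p} \to 0$, this drops below any prescribed multiple of $\tau$ once $n$ is large; we take $\alpha = c_p L h$, which is then negligible within the budget $\tau/3$.

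It remains to handle the two fluctuation terms. Both are decreasing in $n h^p/\log n$, and the displayed hypothesis relating $\log n/n$ to $h^p$ is exactly the lower bound on $n h^p/\log n$ needed to push $C_1\sqrt{\log n/(n h^p)}$ and $64 p^2 \log n/(n h^p\sqrt{(2\pi)^p})$ each below its allotted share of $\tau/3$; together with the bias bound from the previous step, the three contributions sum to at most $\tau/3$ on the same event of probability at least $1 - \frac{1}{n}$, which is the claim. I do not expect a deep obstacle here: Theorem~\ref{ThmJiang} is invoked as a black box and this last step is elementary rearrangement. The one point requiring genuine care is the bias estimate — one must observe that the chosen bandwidth vanishes and track the constant $c_p$ — and this is the sole place where the $L$-Lipschitz hypothesis on $\nu$ from Theorem~\ref{t2} is actually consumed.
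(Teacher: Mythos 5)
Your proposal is correct and takes essentially the same route as the paper: the paper's proof likewise uses the $L$-Lipschitz hypothesis to bound $f_1(h|u|), f_2(h|u|) \leq Lh|u|$, integrates against $K$ to obtain $\alpha = Lh\,\E|u|$, and then applies Jiang's theorem with the bandwidth $h = C'(\log n/n)^{1/p}$ so that the bias and the two fluctuation terms are each forced below a share of $\tau/3$. Your accounting of the three error contributions matches the paper's (which allots $\tau/9$ to each), so there is no substantive difference.
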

\begin{proof}
Since $\nu$ is $L$-Lipschitz, we have $f_1(h|u|) \leq Lh|u|$. This implies that
\begin{align*}
    \int_{\mathbb{R}^p} K(u)f_1(h|u|) \, du & \leq Lh\int_{\mathbb{R}^p} K(u)|u|\, du = Lh\mathbb{E}|u|.
\end{align*}
Similarly, we have
\begin{align*}
    \int_{\mathbb{R}^p} K(u)f_2(h|u|) \, du & \leq Lh\int_{\mathbb{R}^p} K(u)|u|\, du = Lh\mathbb{E}|u|.
\end{align*}
The rest of the proof consists of applying Theorem~\ref{thD2} with $\alpha=Lh\mathbb{E}|u|$ and $h=C^{'}\Big(\frac{\log n}{n}\Big)^\frac{1}{p}$.
\end{proof}
\subsubsection{Bounds on $\delta_2$ and $\min_x \mu(x)$}

By Theorem~\ref{thD2}, we have
\[\delta_2 \leq C\left( \frac{2}{h n \varepsilon \sqrt{(2\pi)^p}} \log \frac{1}{\gamma} \right)^{\frac{1}{p+1}} \leq \frac{\tau}{3}\] for $n \geq \Big(\frac{3C}{\tau} \Big)^{p+1}\Big(\frac{2}{h \varepsilon \sqrt{(2\pi)^p}} \log \frac{1}{\gamma}\Big) $, with probability at least $1-\gamma$. Therefore, it is true with probability $1-\gamma-\frac{1}{n}$ that (apply the triangle inequality as shown in Appendix \ref{inqmu})
\[\min \limits_{x} \mu(x) \geq \tau - \frac{\tau}{3} - \frac{\tau}{3} = \frac{\tau}{3}.\]

\subsubsection{Lower bounds on $\beta_1$ and $\beta_2$}

Analogously to the proof of Theorem~\ref{t1}, we obtain
\begin{equation*}
\beta_1 \ge \frac{1}{1+\frac{\tau/3}{2\tau/3}} = \frac{2}{3}
\end{equation*}
and
\begin{equation*}
\beta_2 \ge \frac{1}{1+\frac{\tau/3}{\tau/3}} = \frac{1}{2}.
\end{equation*}

\subsubsection{Upper bound on $K$}

Refer to Appendix~\ref{AppT1} (equation~\eqref{eq1}) for the previously established upper bound on $D(\nu || \mu)$. By incorporating this with our bounds, we can obtain
\[ K = \left(1 + C_1'\tau\right)^{\frac{1}{2}} \left(1 + C_2'\tau\right)^{\frac{1}{2}}, \]
with probability at least $1-\frac{1}{n}-\gamma$.

The rest of proof comes from plugging our results into Theorem~\ref{ThmBoeEtal}.

\end{document}